\newtheorem{theorem}{Theorem}[section]
\newtheorem{definition}[theorem]{Definition}
\newtheorem{prop}[theorem]{Proposition}
\newtheorem{lemma}[theorem]{Lemma}
\newtheorem{cor}[theorem]{Corollary}
\newcommand{\Aa}{{\mathcal A}}
\newcommand{\Bb}{{\mathcal B}}
\newcommand{\Dd}{{\mathcal D}}
\newcommand{\Ll}{{\mathcal L}}
\newcommand{\Mm}{{\mathcal M}}
\newcommand{\Pp}{{\mathcal P}}
\newcommand{\Tt}{{\mathcal T}}
\newcommand{\id}{\mathrm{id}}
\newcommand{\NM}{{\mathbb N}}
\newcommand{\Z}{{\mathbb Z}}
\newcommand{\R}{{\mathbb R}}
\newcommand{\C}{{\mathbb C}}
\newcommand{\Tr}{\mathrm{Tr}}
\newcommand{\ch}{\mbox{\rm ch}}
\newcommand{\tp}{\mathfrak t}
 \newcommand{\gap}{{\mathcal Gap}_\pe}
 \newcommand{\gapS}{{\mathcal Gap}_{\pe,\se}}
 \newcommand{\gapLS}{{\mathcal Gap}_{\pe,\LS}}
 \newcommand{\pe}{\Ll}
 \newcommand{\LS}{\underline{\pe}}
 \newcommand{\Et}{\mathcal E^{top}_\pe}
\newcommand{\mean}{\mathfrak{m}}
\newcommand{\lang}{L}
\newcommand{\al}{\mathfrak a}
\newcommand{\se}{S}
\newcommand{\Ch}{\mathrm{Ch}}
\newcommand{\Ape}{\Aa_\pe}
\newcommand{\Atb}{\Aa_{\pe,\se}}
\newcommand{\ttb}{{\tp_\se}}
\title{Bragg spectrum, $K$-theory and Gap Labelling of aperiodic solids}
\author{Johannes Kellendonk}
\address{Institut Camille Jordan, Universit\'{e} Lyon~1, 69622 Villeurbanne, France}
\email{kellendonk@math.univ-lyon1.fr}
\date{Version of \today}
\begin{document}
\maketitle
\begin{abstract}
The diffraction spectrum of an aperiodic solid is related to the group of eigenvalues of the dynamical system associated with the solid. Those eigenvalues with continuous eigenfunctions constitute the topological Bragg spectrum. We relate the topological Bragg spectrum to the topological invariants (Chern numbers) of the solid and to 
the gap-labelling group, which is the group of possible gap labels for the spectrum of a Schr\"odinger operator describing the electronic motion in the solid.       
\end{abstract}


\section{Introduction}
A longstanding question in solid state physics is, how is the X-ray diffraction spectrum of a solid related to its electronic spectrum? For crystalline solids the answer is known since the early days of quantum mechanics: X-rays with wave vectors corresponding to Bragg peaks perturb the electronic motion and create gaps in the electronic spectrum \cite{ashcroft1976solid,reed1978iv}. 
For aperiodic solids this question is more subtle.
A perturbative expansion of the Liapunov exponent of the transfer matrix dynamical system of one dimensional tight binding models combined with heuristic arguments has led to a conjecture about where gaps in the spectrum can be found depending on the diffraction measure \cite{luck1989cantor}, but we are not aware of any rigorous development of this approach. Furthermore,
for quasi-crystals and other long range ordered structures approximation methods have been employed to obtain results resembling the periodic case  \cite{kaliteevski2000two,gambaudo2014brillouin,dareau2017revealing}. We approach the question asked above from a topological point of view establishing a relation between the two kinds of spectra which depends only on the spatial structure of the solid. By this spatial structure, which we denote $\pe$, we mean the set of positions of the atoms in the solid\footnote{equivalently, tilings could be used to describe the spatial structure}  and their type. Its topological properties can be described by a $C^*$-algebra $\Ape$ and a dynamical system associated to $\pe$. In this framework we establish in particular a link between the Bragg spectrum of the solid and a topological invariant associated with its electronic spectrum, namely its gap-labelling group. 

One might ask at this point, in which sense is diffraction topological at all? Indeed, a first glimpse at the definition of the diffraction measure shows that it does not change, if one modifies the underlying structure on a set of vanishing density, and therefore diffraction theory has more the flavor of a measure than a topological theory. In contrast to that, the topology of the solid can change quite a bit if the structure is modified on a set of density zero. We therefore first have to share out the topological content of diffraction theory. It is here where a third notion of spectrum comes into play: the dynamical spectrum and more specifically the topological eigenvalues of the dynamical system associated to the solid. One of the major results in diffraction theory states that the positions of the Bragg peaks are eigenvalues of the dynamical system of the solid. In the case of crystals, quasi-crystals and many more structures studied in aperiodic order theory these eigenvalues are topological in the sense that they correspond to wave-vectors $k$ for which the plane wave $x\mapsto e^{i\langle k,x\rangle}$ is pattern equivariant \cite{kellendonk2013topological,kellendonk2014meyer}. Pattern equivariance can be understood as a coherence condition: whenever the solid looks around $x$ the same as around $y$ out to a large distance but perhaps up to a small error, then the plane wave has the same phase at $x$ as at $y$, up to a small error.
It can be regarded as a sign of high long range order of a material if most, if not all, of its  eigenvalues are topological. In an analogy, this is like saying that materials with high long range order correspond to continuous representatives of elements in an $L^2$-theory; and in this sense their diffraction is topological. 

Let $\Et$ stand for the group formed by the topological Bragg peaks. We will construct a morphism $\Phi:\Lambda^n\Et \to K_{n+d}(\Ape)$ from the exterior module of $\Et$ into the $K$-theory of the algebra of the $d$ dimensional solid. It tells us that a good part of the $K$-theoretic invariants of the solid come from its topological Bragg spectrum. We will furthermore see that the Chern numbers of the invariants coming from the topological Bragg spectrum can be directly obtain from a simple module morphism from $\Lambda^n\Et$ to $\Lambda^{d-n}\R^d$.
This can be summarized by the commutative diagram
\begin{equation}\label{eq-main-diagram}
\begin{array}{ccccc}
 \Lambda^n\Et & \stackrel{\Phi}\to  & K_{n+d}(\Ape ) \\
 \downarrow &           & \downarrow \\ \langle \ch^{(d-n)},\cdot\rangle  \\       
 \Lambda^n{\R^d}^* &  \stackrel{D}\to   &   \Lambda^{d-n}{\R^d}
 \end{array}
\end{equation}
where the left vertical map is the morphism induced by the inclusion $\imath:\Et\to{\R^d}^*$, $D$ is the Poincar\'e (or Hodge-) duality operator, and the right vertical map the Connes pairing with the standard Chern-cocycles defined by the translation action and its dual on $\Ape$.

The above is formulated for the cases in which there is no external magnetic field. We leave the question of what happens under the influence of a magnetic field open, commenting only on the two-dimensional case, which already indicates that the situation is more subtle.  

A special case arises if $n$ equals the dimension $d$ of the solid in the above diagram. In this case the right vertical arrow corresponds to the linear functional $\tp_*$ on $K_0(\Ape)$ induced by the trace per unit volume which we will denote here $\tp$.  The above diagram is then related to Bellissard's $K$-theoretical formulation of the gap labelling and the image of the right vertical arrow $\tp_*$ is the gap-labelling group \cite{bellissard1986k,bellissard1992gap}. Indeed, (\ref{eq-main-diagram}) becomes
 $$\begin{array}{ccccc}
 \Lambda^d\Et & \stackrel{\Phi}\to  & K_{0}(\Ape ) \\
 \downarrow \frac1{(2\pi)^d}\det&           & \downarrow  \tp_*  \\       
 \R^* & \cong & \R 
 \end{array}$$
where
$\det(k_1\wedge\cdots \wedge k_d) $ is the determinant of the matrix spanned by the vectors $k_1,\cdots, k_d$, that is, the signed volume of the parallel epiped spanned by these vectors.
The composition $\tp_*\circ \Phi$, which we also call the Bragg to Gap map, is  in one dimension given by $k\mapsto \frac{k}{2\pi}$ and hence injective. 
We investigate the question of when the image of $\tp_*\circ \Phi$ exhausts the gap-labelling group. Our answer provides a justification of the gap-labelling found in the literature for quasi-periodic chains, but also points out the limitation of our topological approach. Indeed, if the Bragg to Gap map is not surjective, then there are gap-labels which to not come from the Bragg spectrum. This happens in particular for the Thue-Mores chain and so the mystery of the gap-opening mechanism for the Kohmoto Hamiltonian on the Thue-Morse chain escapes also our topological analysis.

A recent article \cite{akkermans2021relating}  turns around similar questions. Its main part is about the comparison between the $K$-theory of $\Ape$ and various cohomology theories which can be associated to the solid, and how the trace per unit volume can be formulated in cohomology. We discuss this result here in Section~\ref{sec-coh}. What our work adds to the picture of \cite{akkermans2021relating} is the relation between the topological Bragg spectrum and cohomology thus providing the theoretical background for the observations made in Section~11 of \cite{akkermans2021relating}. We would also like to mention the work by Itza-Ortiz \cite{itza2007eigenvalues}, in which an equivalent formulation of Theorem~\ref{thm-diagram-1} for $d=1$ 
has been obtained in the context of 
dynamical systems and their 
dimension groups.

The article is organised as follows: In Section~\ref{sec-2} we review the description of the spatial structure of a solid by means of decorated point patterns. From these point patterns the pattern algebra is derived. We follow here the description of \cite{Kellendonk2020} which develops that of \cite{kellendonk2000tilings} and is dual to that of \cite{bellissard1992gap}, in particular the pattern space (or hull) arises as the Gelfand spectrum of the pattern algebra. 
After a very short summary of the definition of $K$-theory we present 
in Section~\ref{sec-gap} the gap-labelling of Bellissard. 
In Section~\ref{sec-Bragg} we describe mathematical diffraction theory and recall the all important relation between the diffraction spectrum and the dynamical spectrum of a solid. In particular, the Bragg peaks of the diffraction spectrum are related to the topological eigenvalues. 
In Section~\ref{sec-main} we relate the topological eigenvalues to the $K$-theory of the pattern algebra. The main result (the detailed version of (\ref{eq-main-diagram})) is Theorem~\ref{thm-main-diagram}. 
In Section~\ref{sec-coh} we compare our approach to that of \cite{akkermans2021relating} and discuss in particular the relation between
the topological eigenvalues and the cohomology of the pattern space of the solid. Finally we discuss the questions about the bijectivity of the Bragg to Gap map and explain the example of the Thue-Morse chain in Section~\ref{sec-one-dim}.

\section{The spatial structure of the solid}\label{sec-2}
The solid is an arrangement of atoms (or ions) of different types which create a background potential for the electrons and an electronic density at which the $X$-ray beams are scattered in a diffraction experiment.
What we refer to as the spatial structure of the solid is the set of positions of the atoms, the local configurations or clusters of atoms and how they repeat in space. 
This can be modelled through a decorated Delone set in $\R^d$. We treat the solid as infinitely extended in all directions, so there is no boundary.

Let $ \al$ be a (finite) set of symbols encoding the different type of atoms (ions) of the solid. Let $\pe_a\subset\R^d$ be the set of positions of atoms of type $a$. We assume that the set $\LS:=\bigcup_{a\in \al} \pe_a$ is a Delone set (a uniformly discrete, relatively dense set). The collection $\pe$ of all $\pe_a$ is called a decorated Delone set of $\R^d$. We may identify $\pe$ with the subset $\prod_{a\in \al}\pe_a\subset \prod_{a\in \al}\R^d$. 
It describes the spatial structure of the solid. Following \cite{Kellendonk2020} (in which the constructions of \cite{kellendonk1995noncommutative,kellendonk2000tilings} are generalised beyond the finite local complexity case) we derive from $\pe$ a $C^*$-algebra $\Ape$ which can be seen as the observable algebra of the structure.
All topological results below will depend only on $\pe$. 

The $R$-patch (or cluster)  of $\pe$ at $x\in \R^d$ is 
$$B_R[\pe;x] := \prod_{a\in \al} (B_R(x)\cap \pe_a) \cup \partial B_R(x).$$
Here $B_R(x)$ is the ball of radius $R$ centered at $x\in\R^d$ and $\partial B_R(x)$ the sphere. 
The group of translations acts (diagonally) on $\R^d\times\cdots\times\R^d$. We define an $R$-patch class to be an equivalence class of an $R$-patch under this translation action. We say that the  $R$-patch class $\mathcal P$ occurs at $x\in \R^d$ if $B_R[\pe;x] \in \mathcal P$.

\subsection{Pattern equivariant functions and operators}\label{sec-2.1}
Let $\lang^R(\pe)_0$ be the set of $R$-patch classes of $\pe$ equipped with the Hausdorff metric topology: the class of $B_R[\pe;x]$ is at least $\epsilon$-close to the class of  $B_R[\pe;y]$ if there is $z\in \R^d$ such that 
any point of $B_R[\pe;x]$ is within distance $\epsilon$ of a point of $B_R[\pe;y]+z$ and vice versa.
We then define $\lang^R(\pe)$ to be the closure of $\lang^R(\pe)_0$ in that topology. 
\begin{definition}\label{def-PEF}
We say that a bounded continuous function $f:\R^d\to\C$ is strongly pattern equivariant with radius $R>0$ if there is a continuous function $b:\lang^R(\pe)\to\C$ such that $f(x) = b(B_R[\pe;x])$ for all $x\in\R^d$. 
We define $C_\pe(\R^d)$ to be the closure of all strongly pattern equivariant functions in the sup norm.
\end{definition}
An example of a strongly pattern equivariant function is the convolution $g\ast \delta_{\pe_a}$ of a compactly supported continuous function $g:\R^d\to \C$ with the sum of the Dirac measures at the points of $\pe_a$,
$\delta_{\pe_a}=\sum_{x\in\pe_a} \delta_x$, $\delta_x(A) = 1$ if $x\in A\subset\R^d $ but $0$ if $x\notin A$. 

We call the functions of $C_\pe(\R^d)$ pattern equivariant and note that they can be described as follows: strongly pattern equivariant functions with radius $R$ form a unital sub-algebra of all bounded continuous functions on $\R^d$ and a function which is strongly pattern equivariant with radius $R$ is also strongly pattern equivariant with radius $R'>R$. Hence strongly pattern equivariant functions define a directed system of algebras and $C_\pe(\R^d)$ is its $C^*$-direct limit. 
The Gelfand spectrum of $C_\pe(\R^d)$
is hence the inverse limit of the topological spaces $\lang^R(\pe)$. This space, which we denote $\Omega_\pe$, is usually referred to as the hull of $\pe$. The action $\alpha:\R^d \to \mathrm{Aut}(C_\pe(\R^d))$ by translation, $\alpha_a(f) (x) = f(x+a)$, induces an $\R^d$-action on $\Omega_\pe$. 

The double $R$-patch of $\pe$ at $(x,y)\in \R^d\times\R^d$ is the subset
$$B_R[\pe;x,y]:=B_R[\pe;x] \times B_R[\pe;y]\subset  \left(\prod_{a\in \al} \R^d \right) \times \left(\prod_{a\in \al} \R^d \right).$$
We call $|y-x|$ the range of the double patch. We use again the diagonal action to translate double $R$-patches and to define a double $R$-patch class as the equivalence class of the double patch under translation. 

Let ${\lang^{R,M}}_0$ be the set of double $R$-patch classes of $\pe$ of range $\leq M$ equipped with the analogous metric topology as above. Again we let $\lang^{R,M}$ be the closure of ${\lang^{R,M}}_0$.
\begin{definition}\label{def-PEK}
We say that a bounded continuous function $F:\R^d\times\R^d\to\C$ is strongly pattern equivariant with radius $R>0$ and range $M> 0$ if there is a continuous function $b:\lang^{R,M}\to\C$ 
satisfying 
\begin{equation}\label{eq-M}
\lim_{|x-y|\to M} b(B_R[\pe;x,y])=0
\end{equation} and
such that $F(x,y) = b(B_R[\pe;x,y])$ for all $x,y\in\R^d$. 
\end{definition}
The space of continuous functions $b:\lang^{R,M}\to\C$ satisfying (\ref{eq-M})
is naturally included in the  space of continuous functions $b:\lang^{R',M'}\to\C$ 
satisfying (\ref{eq-M}) with $M'\geq M$, 
provided $R'\geq R$. We may therefore define
the direct limit over $R\to+\infty$ and $M\to \infty$ of all strongly pattern equivariant functions with radius $R>0$ and range $M > 0$. We denote this vector space by $\Ape^{(s)}$. Its elements are pattern equivariant integral operators. 
We equip $\Ape^{(s)}$ with
the product and involution
\begin{equation}\label{eq-prod} 
F_1 F_2 (x,y) = \int_{\R^d} F_1(x,z) F_2(z,y)dz,\quad F^*(x,y) = \overline{F(y,x)}.
\end{equation}
Here, integration is w.r.t.\ Lebesgue measure on $\R^d$.
We define $\Ape$ to be the closure of $\Ape^{(s)}$ in the norm given by the representation $\pi$ on $L^2(\R^d)$,
$$\pi(F)\psi(x) = \int_{\R^d} F(x,y)\psi(y) dy.$$
Whenever convenient, we identify $\Ape$ and $\Ape^{(s)}$ with operators in this representation. We call $\Ape$ the continuous pattern algebra. It may be seen as the algebra of observables of the solid.  

Let $\alpha:\R^d \to \mathrm{Aut}(C_\pe(\R^d))$ be the translation action $\alpha_a(f) (x) = f(x+a)$. Recall that the crossed product $C_\pe(\R^d)\rtimes_\alpha \R^d$ is the $C^*$-closure of the algebra 
$C_c(\R^d,C_\pe(\R^d))$, of continuous compactly supported functions $\tilde F:\R^d\to C_\pe(\R^d)$ with $\alpha$-twisted convolution product and involution,
$$ \tilde F_1 \ast_\alpha \tilde F_2(h) = \int_{\R^d} \tilde F_1(x) \alpha_x(\tilde F_2(h-x)) dx,\quad \tilde F^*(h) = \alpha^{}_h(\tilde F(-h)^*).$$
\begin{lemma}
$\Ape$ is isomorphic to the crossed product $C_\pe(\R^d)\rtimes_\alpha \R^d$. 
\end{lemma}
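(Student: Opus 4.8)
This is in essence Bellissard's identification of the observable algebra with a transformation-group $C^*$-algebra, cf.\ \cite{bellissard1992gap,Kellendonk2020}; the plan is to exhibit the isomorphism explicitly via the change of variables that turns an integral kernel into a function of its ``diagonal'' variable and its ``displacement''. For $F\in\Aa^{(s)}$ set $\Psi(F)(h)(x):=F(x,x-h)$, so that $\Psi(F)$ is a map $\R^d\to\C^{\R^d}$. The first step is to check that $\Psi(F)\in C_c(\R^d,C_\pe(\R^d))$. For fixed $h$ the double $R$-patch $B_R[\pe;x,x-h]$ is a continuous function of $B_{R+|h|}[\pe;x]$, because $B_R(x-h)\subseteq B_{R+|h|}(x)$; hence $x\mapsto F(x,x-h)$ is strongly pattern equivariant with radius $R+|h|$ and $\Psi(F)(h)\in C_\pe(\R^d)$. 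Condition~(\ref{eq-M}) forces $\Psi(F)(h)=0$ once $|h|\geq M$, which gives compact support in $h$, and uniform continuity of the coefficient function $b$ together with the joint continuity of $(x,h)\mapsto B_R[\pe;x,x-h]$ gives norm-continuity of $h\mapsto\Psi(F)(h)$.

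The second step is to show that $\Psi$ intertwines the $*$-algebra structures. Substituting $z=x-k$ in $F_1F_2(x,y)=\int_{\R^d}F_1(x,z)F_2(z,y)\,dz$ rewrites the composition as $\int_{\R^d}\Psi(F_1)(k)(x)\,\alpha_{-k}\big(\Psi(F_2)(h-k)\big)(x)\,dk$, which is the $\alpha$-twisted convolution of $\Psi(F_1)$ and $\Psi(F_2)$; the twist is precisely the covariance of multiplication operators against translations. Similarly $F^*(x,y)=\overline{F(y,x)}$ is carried to $\alpha_h^{-1}\big(\Psi(F)(-h)^*\big)$, the crossed-product involution. These are routine identities once the sign conventions for $\alpha$ are pinned down. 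For surjectivity onto a dense subalgebra: given $g\in C_c(\R^d,C_\pe(\R^d))$, approximate it uniformly in $h$ by a finite sum $\sum_j\chi_j(h)f_j$ with $\chi_j\in C_c(\R^d)$ and each $f_j$ strongly pattern equivariant of a common radius $R$; then $F(x,y):=\sum_j\chi_j(x-y)f_j(x)$ lies in $\Aa^{(s)}$ --- multiplying by $\chi_j(x-y)$ is legitimate because a double $R$-patch class already records the displacement vector $x-y$ --- and $\Psi(F)=\sum_j\chi_j(\cdot)f_j$. Since the inductive-limit topology on $C_c(\R^d,C_\pe(\R^d))$ is finer than the crossed-product norm, the closure of $\Psi(\Aa^{(s)})$ is all of $C_\pe(\R^d)\rtimes_\alpha\R^d$.

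Finally, one must show $\Psi$ is isometric for the two $C^*$-norms. Under $\Psi$ the defining representation $\pi$ on $L^2(\R^d)$ becomes the integrated form of the covariant pair $(M,\lambda)$, with $M$ multiplication by $C_\pe(\R^d)\subseteq C_b(\R^d)$ and $\lambda$ the regular representation of $\R^d$: indeed $\pi(F)\psi(x)=\int_{\R^d}F(x,y)\psi(y)\,dy=\big(\int_{\R^d}M(\Psi(F)(t))\,\lambda_t\,dt\big)\psi(x)$. This pair is unitarily equivalent to a regular representation of $C_\pe(\R^d)\rtimes_\alpha\R^d$ induced from a character of $C_\pe(\R^d)$, and that representation is faithful because $\R^d$ is amenable --- so the full and reduced crossed products coincide --- and the kernel of the character contains no nonzero $\R^d$-invariant ideal, the $\R^d$-orbit of $\pe$ being dense in the pattern space $\Omega_\pe$. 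A faithful $*$-homomorphism of $C^*$-algebras is isometric, so $\Psi$ is isometric onto its image and extends to a $*$-isomorphism $\Aa\cong C_\pe(\R^d)\rtimes_\alpha\R^d$.

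The bookkeeping in step two is routine. The two points that genuinely require care are the first step --- unwinding the definitions of $\lang^{R,M}$ and of~(\ref{eq-M}) to see that $\Psi(F)$ takes values in $C_\pe(\R^d)$, has compact support, and is norm-continuous --- and the identification in the last step of the operator norm on $\Aa$ with the crossed-product norm, which is where amenability of $\R^d$ enters.
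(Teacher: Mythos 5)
Your argument is correct and, for the algebraic part, is essentially the paper's: the same change of variables (the paper uses $\tilde F(h)(x)=F(x,x+h)$; your $\Psi(F)(h)(x)=F(x,x-h)$ produces, as your own computation shows, the convolution twisted by $\alpha_{-k}$, i.e.\ the crossed product by the inverse action, which is repaired by composing with $h\mapsto -h$ -- a harmless convention issue, and your density argument for the image is if anything more careful than the paper's ``easily seen to be bijective''). Where you genuinely diverge is the identification of the two $C^*$-norms. You prove that the defining representation $\pi$ on $L^2(\R^d)$, viewed as the integrated form of $(M,\lambda)$, is itself a \emph{faithful} representation of $C_\pe(\R^d)\rtimes_\alpha\R^d$, invoking amenability of $\R^d$ together with the density of the $\R^d$-orbit of $\pe$ in $\Omega_\pe$ (equivalently, that the evaluation character contains no nonzero invariant ideal in its kernel). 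That statement is true in this commutative setting -- it is the standard fact that for an amenable group acting on a compact space, the representation induced from a point with dense orbit is isometric for the reduced crossed product norm, provable by lower semicontinuity of $x\mapsto\|\pi_x(a)\|$ together with the unitary equivalence of the $\pi_x$ along an orbit -- but you assert it rather than prove it, and it is the one nontrivial input of your route. The paper sidesteps this entirely: it takes the crossed product norm to be the one coming from the representation induced from the (obviously faithful) multiplication representation of $C_\pe(\R^d)$ on $L^2(\R^d)$, and observes that this induced representation is unitarily equivalent to an infinite multiple of $\pi$, so it defines the same norm as $\pi$ without any faithfulness claim about $\pi$ alone. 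Your route costs a citation (or the lsc/orbit argument) but buys a strictly stronger conclusion, namely faithfulness of the single orbit representation $\pi$; the paper's route is more elementary and is all that the lemma requires.
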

\begin{proof} Denote by $C^{(s)}_\pe(\R^d)$ all strongly pattern equivariant functions. 
The map 
$\Ape^{(s)}\ni F \mapsto \tilde F\in C_c(\R^d,C^{(s)}_\pe(\R^d))$,
$$\tilde F(h)(x) = F(x,x+h) $$ is easily seen to be bijective and to preserve the product and the involution. 
$C_\pe(\R^d)\rtimes_\alpha \R^d$ is the closure in the norm of the induced representation of the representation of $C_\pe(\R^d)$ on $L^2(\R^d)$ be left multiplication. This induced representation is unitarily equivalent to the infinite direct sum of the representation $\pi$. The norm defining the closure to obtain $C_\pe(\R^d)\rtimes_\alpha \R^d$ is therefore the same as the norm defining the closure to obtain $\Ape$.\end{proof}

\subsection{Trace per unit volume}
The operator trace on operators on $L^2(\R^d)$ is undefined on most of the elements of $\Ape$, one has to take it per unit volume in order to get finite values. However, the construction of a trace per unit volume is not always unambiguous and may depend on choices. 

Let  $(\Lambda_n)_n$ be a van Hove sequence. 
The sequence of linear functionals $\mean_n: C_\pe(\R^d)\to \C$,
$$ \mean_n(f) = \frac1{\mathrm{vol}(\Lambda_n)}\int_{\Lambda_n} f(x) dx$$
admits a converging subsequence in the weak-*-topology. By going over to a sub-sequence we may assume that $$\mean(f) = \lim_n \mean_n(f)$$
exists
for all $f\in C_\pe(\R^d)$. It then defines a positive linear functional on $C_\pe(\R^d)$ of norm $1$ which is easily seen to be translation invariant. It can be used to define the frequencies of patches. Indeed, the frequency of an $R$-patch class $\mathcal P$ is the density of the set  $\pe_\mathcal P= \{x\in \R^d : B_R[\pe;x] \in \mathcal P\}$ of points where it occurs, and this density is determined via $\mean$: Given any continuous function $\kappa:\R^d\to \R$ of compact support and integral $1$, the frequency of $\Pp$ is $\mean(\kappa\ast \delta_{\pe_\Pp})$.  
If all van Hove sequences define the same frequencies of its $R$-patches then the frequency of $\Pp$ is given by the naive formula: count the points in $B_R(x)\cap \pe_\Pp$ devide by the volume of $B_R(x)$ and consider the limit when $R\to +\infty$.

The mean $\mean$ on $C_\pe(\R^d)$ extends to a semi-finite, lower semi-continuous trace on $\Ape$ which we denote by $\tp$, it is given on $\Ape^{(s)}$ by
$$\tp (F) = \mean(x\mapsto F(x,x))$$
and thus corresponds to the trace per unit volume. In the interpretation of $\Ape$ as a crossed product, 
$\tp$ is known as  the dual trace $\tp=\hat\mean :C_\pe(\R^d)\rtimes_\alpha \R^d\to \C$.

 \subsection{Discrete pattern algebras}
 To describe the physical operators in the tight binding approximation one uses a discrete version of the tiling algebra. Here the word discrete refers to the fact that the translation action of $\R^d$ is broken down to a discrete set of translations.

 A uniformly discrete subset $\se$ of $\R^d$ is locally derivable from $\pe$ if there exists $R>  0$ and a continuous function $b:\lang^R(\pe) \to \lang^1(\se)$ such that, for all $x$, the $1$-patch class of $\se$ at $x$ is $b$ applied to the $R$-patch class of $\pe$ at $x$, $B_1[\se;x] = b(B_R[\pe;x])$.
 A simple example is $S=\LS$ where the map $b$ simply forgets the symbols but below we allow $\se$ to be any Delone set of $\R^d$ which is locally derivable from $\pe$.
 
By restricting in the above definitions \ref{def-PEF} and \ref{def-PEK} the pattern equivariant functions or integral kernels to $S$ or to $S\times S$, we obtain the discrete algebra $C_\pe(S)$ of pattern equivariant functions $f:S\to\C$ and the space $\Atb^{(s)}$ of pattern equivariant kernels $F:S\times S\to\C$. We equip $\Atb^{(s)}$ with the product and the involution
\begin{equation}\label{eq-prod-s} F_1 F_2 (s,t) = \sum_{r\in \se} F_1(s,r) F_2(r,t),\quad F^*(s,t) = \overline{F(t,s)},
\end{equation}
and let $\Atb$ be the completion in the norm defined by the representation $\pi_S$ on $\ell^2(S)$,
$$\pi_S(F)\psi(s) =  \sum_{r\in \se} F(s,r) \psi(r) .$$
We call $\Atb$ the discrete pattern algebra associated to $\se$.  The Gelfand spectrum of $C_\pe(S)$ is the discrete pattern space of $\pe$ associated $\se$,  we denote it by $\Xi_{\pe,\se}$. As $\se$ is relatively dense, $\Xi_{\pe,\se}$ is a transversal to the $\R^d$-action on $\Omega_\pe$. $\Xi_{\pe,\LS}$ is also referred to as the canonical transversal of $\Omega_\pe$.
 
In general, $\Atb$ is a groupoid $C^*$-algebra. If $d=1$ then $\Atb$ is isomorphic to the crossed product of $C_\pe(S)$ by $\Z$, where the $\Z$-action is given by $\alpha_1(f)(s) = f(s')$ with $s'\in \se$ being the next point to the right of $s$.

\subsection{Morita equivalence}
\label{sec-ME}
\newcommand{\ME}{\Mm}
Topological effects in solids, like their topological phase and the bulk boundary correspondence, can be described in both, the continuous system or in the tight binding approximation. This comes about as 
the continuous and the discrete pattern algebra are (strongly) Morita equivalent and therefore have the same $K$-theory. We explain this adapting the exposition of \cite{bellissard1992gap} to our framework.

The notion of pattern equivariance can be applied to bounded continuous functions 
$\xi:\se\times\R^d\to \C$, leading a Morita equivalence $\Atb-\Ape$-bimodule:
We define first the linear space $ \ME^{(s)}$ of functions $ \xi:\se\times\R^d\to \C$ which are restrictions to 
$\se\times\R^d\subset \R^d\times\R^d$ of functions from $\Ape^{(s)}$. This space carries a left 
$\Atb^{(s)}$ action $\Atb^{(s)}\times \ME^{(s)}\ni (F,\xi) \mapsto F\xi \in \ME^{(s)}$ by the analog of formula (\ref{eq-prod-s}) (with $F$ in place of $F_1$ and $\xi$ in place of $F_2$). It carries also a right $\Ape^{(s)}$ action $\ME^{(s)}\times \Ape^{(s)}\ni (\xi,F) \mapsto \xi F \in \ME^{(s)}$ by the analog of formula (\ref{eq-prod}) (with $\xi$ in place of $F_1$ and $F$ in place of $F_2$) It carries moreover an $\Atb$-valued  and an $\Ape$-valued scalar product which are given again by formulas analogous to (\ref{eq-prod}) and (\ref{eq-prod-s})
$$( \xi_1, \xi_2)_{\Atb} =  \xi_1  \xi_2^*,  \qquad ( \xi_1, \xi_2)_{\Ape}={ \xi_1^*}  \xi_2$$
where $ \xi^*(x,s) := \overline{ \xi(s,x)}$.
Then 
$ \|  \xi \|^2_ \ME = \|( \xi^*, \xi)\|_{\Ape} = \|( \xi,  \xi^*)\|_{\Atb}$
defines a norm on $ \ME^{(s)}$. Its completion in that norm $\ME$ is a  $\Atb-\Ape$-bimodule. 

Let $u\in  \ME$ with $(u,u)_{\Atb} = 1$. Then $\Pi_u = (u,u)_{\Ape}$ is an orthogonal projection.
Furthermore, we obtain an injective $*$-algebra morphism
$i_u:\Atb\to \Ape$ through
$$i_u(c) = (u,cu)_{\Ape}.$$
As $S$ is relatively dense so that $\Xi_{\pe,\se}$ is a transversal in $\Omega_\pe$, $\ME$ is a Morita equivalence bimodule and $i_u$ 
induces an isomorphism ${i_u}_*:K_i(\Atb)\to K_i(\Ape)$ between the $K$-groups.
As $\se$ is uniformly discrete, it is not difficult to construct a function $u$ with the above properties. If $r_{min}$ is the separation constant of $\se$, that is, $r_{min} = \inf_{s\neq t\in\se} |s-t|$, any continuous function $g:\R^d\to \C$ with support contained in the $\frac{r_{min}}2$-ball around $0$ and $\int |g(x)|^2 dx =1$ yields such a $u$, notably $u(s,y) := g(y-s)$. With that choice
$$ (u,u)_{\Atb}(s,t) = \int u(s,z) \overline{u(t,z)} dz = \int g(z) \overline{g(z+s-t)} dz= \delta_{s,t}$$
for $s,t\in\se$ 
and 
$$\Pi_u(x,y) = \sum_{s\in \se} \overline{u(s,x)} u(s,y) =  \sum_{s\in \se} \overline{g(x-s)} g(y-s) .$$

The trace per unit volume $\tp$ on $\Ape$ induces a finite trace $\ttb$ on $\Atb$, namely
\begin{equation} \label{eq-tr}
\ttb = :\nu \, \tp\circ i_u
\end{equation}
where 
$$\nu^{-1} = \tp(\Pi_u) = \mean(|g|^2\ast \delta_\se) = \mathrm{dens}(\se).  $$
This situation can be summarised in the commuting diagram
\begin{equation}\label{eq-ME}
\begin{array}{ccc}
K_{0}(\Ape) & \stackrel{{i_u}_*^{-1}}\to &  K_{0}(\Atb)\\
 \downarrow \tp_* &      & \downarrow  \mathrm{dens} S \,\ttb_*  \\       
 \R & = & \R 
 \end{array}
 \end{equation}
While the inclusion of $\Atb$ in $\Ape$ depends on the choice of $u$, the trace $\ttb$ does not depend on it. Note that $\ttb$ is normalised, $\ttb(1) = 1$.  
 \bigskip
 
We say that $\pe$ has \emph{finite local complexity} if, for any $R>0$, there are only finitely many classes of $R$-patches  $B_R[\pe;x]$ with $x\in \LS $. 
This condition implies that $\Xi_{\pe,\LS}$ is a totally disconnected space and $\Ape$ strongly Morita equivalent to a crossed product algebra $C(\Xi)\rtimes_\alpha\Z^d$ of a $\Z^d$-action on a totally disconnected compact space $\Xi$ so that one may compute the $K$ theory of $\Ape$ with the help of the Kasparov spectral sequence.

\subsection{Schr\"odinger operators in the one-particle approximation}
In the one-particle approximation, the motion of a particle in the solid is described by the Schr\"odinger equation with an unbounded Hamiltonian $H$ on $L^2(\R^d)$. It can therefore not belong to the pattern algebra $\Ape$. Instead, we replace $H$ by a bounded operators which contains the same topological information. Again, we adapt  \cite{bellissard1992gap} to our framework.

Let $H_0$ be a translation invariant closed self adjoint operator whose resolvent $(H_0+i)^{-1}$ is an integral operator on $L^2(\R^d)$. We may take the free Laplacian $H_0=-\Delta$, for example. $(H_0+i)^{-1}$ is pattern equivariant and thus an element of $\Ape$. Let $V\in C_\pe(\R^d)$ be real valued. As the Neumann series associated to the resolvent identity is norm-convergent, also the resolvent $(H+i)^{-1}$ of $H=H_0+V$ belongs to $\Ape$. $(H+i)^{-1}$ is a normal element of $\Ape$. By Gelfand theory, $\tilde f((H+i)^{-1})\in\Ape$ for any continuous function $\tilde f:\C\to\C$ which vanishes at $0$ (if $\tilde f(0)\neq 0$ then 
$\tilde f((H+i)^{-1})$ belongs to the unitization of $\Ape$). 
Let $g(z) = z^{-1}-i$ and $f$ be a continuous function $f:\C\to \C$ which vanishes at infinity. Then $f\circ g$ is a continuous function which vanishes at $0$ and therefore $f\circ g((H+i)^{-1})\in\Ape$. This shows that $f(H)\in\Ape$ for any continuous function $f:\R\to \C$ which vanishes at infinity. 

The simplest example of a Hamiltonian which is affiliated to $\Ape$ is
$$ H = -\Delta + \sum_{a\in\al} \sum_{x\in\pe_a} v_a \ast \delta_x. $$

Internal degrees of freedom like spin may be taken into account by going over to the Hilbert space $L^2(\R^d)\otimes\C^N$ and the algebra $M_N(\Ape)$.

The above approach can be extended to external magnetic fields by incorporating the magnetic field in the algebra by means of twisting cocycles \cite{bellissard1992gap}. We focus here on the situation without external magnetic fields.

\subsection{Tight binding operators in the one-particle approximation}
Given a Delone set $S$ of $\R^d$, a
tight binding operator on $S$ is an operator on $\ell^2(S)$ of the form
$$ H\psi(s) = \sum_{s'} H_{s s'} \psi(s')$$
where $H_{s s'}:S\times S\to \C$ is a kernel which decays sufficiently fast if $|s-s'|\to +\infty$. Often even $H_{ss'}=0$ if $|s-s'|$ exceeds a finite range. In our context we assume that $S$ is locally derivable from $\pe$  and that the kernels are pattern equivariant so that $H$ is an element of $\Atb$. The advantage of tight binding operators over Schr\"odinger operators is that they are bounded and easier accessible by numerical methods, but from the topological point of view, they are equivalent.

Internal degrees of freedom like spin may again be taken into account by tensoring on a copy of $\C^N$, $\ell^2(S)\otimes\C^N$ and passing to the algebra $M_N(\Atb)$. 
\section{$K$-theoretical gap labelling}\label{sec-gap}
\subsection{$K$-theory}
We only review the bare minimum to set up the notation and allow for an interpretation of the elements of the $K$-groups sufficient to understand the $K$-theoretical gap-labelling, refering the reader to \cite{blackadar1998k,rordam2000introduction} for details.

The $K_0$-group of a $C^*$-algebra $A$ is obtained from the (orthogonal) projections in $M_n(A^+)$, the algebra of $n\times n$ matrices with coefficients in the algebra obtained by adjointing a unit to $A$. If $A$ has a unit, it is not necessary to use $A^+$, but the continuous pattern algebra $\Ape$ is not unital. The process of adding a unit comes with a so-called scalar map $s:A^+\to\C$, an algebra morphism which maps the added unit to $1\in\C$ and whose kernel is $A$. By identifying an element of $M_n(A^+)$ as the upper left block of a matrix in $M_m(A^+)$, $m>n$, with otherwise $0$ entries, one can build the semigroup
$$\Dd(A^+) = \bigcup_{n\geq 1} \mathrm{Proj}(M_n(A^+))/\sim_h $$ 
where $  \mathrm{Proj}(M_n(A^+))$ denotes the (orthogonal) projections of $M_n(A^+)$ and we quotient out by homotopy. The addition is given by the direct sum of representatives and turns out to be abelian. Applying the Grothendieck functor to $\Dd(A^+)$ , which turns an abelian semi-group into the group of its formal differences, we obtain an abelian group which is denoted $K_{0}(A^+)$. The construction is functorial and so there is a group morphism $K_0(s):K_{0}(A^+)\to \Z$ which maps the class of the projection given by the added unit to $1\in\Z$ and whose kernel is, by definition, $K_0(A)$. I turns out that any element of $K_0(A)$ can be written as a formal difference $[p]_0-[s(p)]_0$ where $p$ is a projection in $M_n(A^+)$, its scalar part $s(p)$ a projection in $M_n(\C)$ and $[\cdot]_0$ denotes their homotopy class. 
 
The construction of $K_1(A)$ is similar, but one considers unitaries instead of projections. Now a unitary in $U_n(A^+)$, the set of unitary elements of $M_n(A^+)$, can identified with a unitary in $U_m(A^+)$, $m>n$ by putting it into the upper left corner and completing the matrix with $1$'s on the diagonal and $0$'s everywhere else. This allows again to take the union and $K_1(A)$ is that union
$$K_1(A) = \bigcup_{n\geq 1} U_n(A^+)/\sim_h $$
which an abelian group under direct sum of the representatives (or, equivalently, their multiplication). 
\subsection{Gap-Labelling} 
A gap in the spectrum of the Hamiltonian $H$ is a connected component of its complement. Since there are at most countably many gaps, the gaps of $H$ could be labeled by natural numbers. But there is more structure to it. Gaps are ordered on the energy line and we want a labelling which respects this order. Furthermore, there is a hidden group structure behind the gaps which comes to light through the use of $K$-theory. The gap labelling group is  a subgroup $\gap$ of $\R$, which depends only on the spatial structure of the solid $\pe$, and is such that the gaps of any Hamiltonian which is resolvent affiliated to $\Ape$ and bounded from below can be labelled by elements of $\gap$ in a way which respects the order on the energy line. Bellissard \cite{bellissard1986k} proposed to use the $K$-theory of $\Ape$ to define the gap labelling group as follows.

If $E$ lies in gap  of the spectrum of $H$ then the spectral projection of $H$ onto the energy interval $(-\infty,E]$ is a bounded continuous function of $H$, namely it is given by $P_E(H)$ for any (continuous) function $P_E$ which is $1$ on $(E_{min},E]$ and $0$ on all energies above the gap. Here $E_{min}$ is any lower bound to $H$ (or $-\infty$). Hence if $H$ is bounded from below and resolvent affiliated then $P_E(H)$ is a projection in $\Ape$, 
as we may take for $P_E$ a continuous function which vanishes at infinity. Clearly $P_E$ does not change if $E$ varies inside the gap and so we can associate to each gap a projection in $\Ape$ and hence an element in $K_0(\Ape) $. 

The label associated to the gap is $\tp(P_E(H))$. This can be reinterpreted as the result of Connes' pairing between the cyclic cocycle defined by the trace $\tp$ and the $K_0$-class defined by $P_E(H)$. Indeed, the pairing of $\tp$ with a $K_0$-class $[p]_0-[s(p)]_0$, where $p$ is a projection in $M_n(\Ape^+)$, is 
$$\langle\tp, [p]_0-[s(p)]_0 \rangle = \tp \,\Tr(p-s(p))$$
where $\Tr$ is the matrix trace on $M_n(\C)$. Note that this
expression reduces to $\tp(p)$ if $p\in \Ape$ as then $s(p)=0$. In particular, if $H$ is bounded from below and resolvent affiliated to $\Ape$ then
$\tp(P_E(H)) = \langle\tp, [P_E(H)]_0 \rangle$. The map $\langle\tp,\cdot\rangle$ is also denoted $\tp_*$, as we did in the introduction.
\begin{definition}
The (continuous) gap labelling group associated to $\pe$ is
$$\gap := \langle\tp, K_0(\Ape)\rangle $$
where $\Ape$ is the continuous pattern algebra and $\tp$ the trace per unit volume. 
\end{definition}

While $\gap$ is a subgroup 
of $\R$ it should be kept in mind that its elements carry a dimension, namely $\mathrm{length}^{-d}$, as $\tp$ is the trace is per unit volume.

Of course, we don't need $K$-theory to define this gap-label or to calculate it. But the advantage of using $K$-theory is that we can compute the group $\langle\tp, K_0(\Ape)\rangle$ in many cases just from the data of $\pe$ and so restrict the possible set of gap-labels. The group $\gap$ plays the role of a selection rule: Whatever the potential $V$, as long as it is pattern equivariant all labels of the gaps of $H=-\Delta +V$ must belong to the gap labelling group.  This is the way Bellissard proved that for the tight binding Hamiltonian on the Thue-Morse chain, there can't be any gap with gap-label different from $\frac13\frac{m}{2^n}$, $m\in\NM, n\in\Z$ \cite{bellissard1990spectral}. 

Finally we should mention that the gap label $\tp(P_E(H))$ coincides with the integrated density of states of $H$ up to the gap. 
The integrated density of states in an energy interval is roughly speaking the number of eigenstates per unit volume whose energy belongs to the energy interval. To define it properly one has to first put the system in a finite box, count the number of eigenstates with energy below $E$ of the restriction of $H$ to the box, divide by the volume of the box, and then let the size of the box go to infinity. This can again be done with a van Hove sequence, as above for the trace per unit volume. There is, however, one issue to be controlled, namely that the limit exists and is not spoiled by spurious states caused by the boundary conditions which have to be imposed on the  restrictions of $H$ to the elements $\Lambda_n$ of the van Hove sequence. We refer to \cite{bellissard1992gap,kellendonk1995noncommutative,lenz2005ergodic} for this discussion about the so-called Shubin formula. 

Via this interpretation the gap-labelling has physical significance:  the integrated density of states of a Hamiltonian affiliated to $\Ape$ cannot take any value on a gap of its spectrum, but these values are constraint (topologically quantized) to the subgroup $\gap$ of $\R$. Note that $\gap$ is countable if $\Ape$ is separable. 

\subsection{Gap labelling for the tight binding approximation}
For tight binding approximations the above arguments are completely analogous when using the discrete pattern algebra $\Atb$, but for one detail: If, taking into account internal degress of freedom,  the tight binding Hamiltonian $H$ belongs to $M_N(\Atb)$ then the gap labels of $H$ will all lie in
$$\gapS \cap [0,N], \quad \gapS := \langle \ttb,K_0(\Atb))\rangle,$$
and $\langle \ttb,K_0(\Atb))\rangle = \frac1{\mathrm{dens}(S)} \langle \tp,K_0(\Ape)\rangle$.
This simply follows from (\ref{eq-tr}) and the fact that any spectral projection of $H$ lies below the unit $1_N\in M_N(\Atb)$.

The gap-labelling conjecture says that,
if $\pe$ has finite local complexity then $ \gapLS$ is the frequency module of $\pe$, that is, the subgroup of $\R$ which is generated by the frequencies $ \nu(\Pp)$ of the patch classes $\Pp$ of $\pe$.
The conjecture has been proven for $d\leq 3$ \cite{bellissard2001gap}, see also \cite{kaminker2003proof,bellissard2006spaces,benameur2007index}.

\section{Bragg spectrum}\ref{sec-Bragg}
In an X-ray picture of the solid, the bright spots are referred to as Bragg peaks and the 
Bragg spectrum is the set of positions of these peaks. They depend on the precise conditions of the X-ray experiment. Mathematically they are determined by the diffraction measure associated to the charge distribution of the solid \cite{baake2013aperiodic}. A Bragg peak can be identified with a point mass component of the diffraction measure, and its intensity with the value of the measure at that point. So the mathematical definition of the Bragg spectrum is as the support of the pure point part of the diffraction measure.
 
Let $\omega$ be the electronic density at which the $X$-rays are diffracted. We suppose that $\omega$ is a continuous function and argued that it should be pattern equivariant, $\omega \in  C_\pe(\R^d)$. One may take for example $\omega = \sum_{a\in A} \rho_a \ast \delta_{\pe_a}$ where $\rho_a$ is the density contribution from atom of type $a$.  We use the van Hove sequence $(\Lambda_n)_n$ which has been used above for the definition of the trace per unit volume $\mean$ to define the autocorrelation distribution $\gamma_\omega$ 
$$\mathcal \gamma_\omega = \lim_{n\to \infty}\frac1{|\Lambda_n|} \omega_{\left|_{\Lambda_n}\right.}\ast \tilde\omega_{\left|_{\Lambda_n}\right.} $$
where $\tilde w (x) = \overline{w (-x)}$. Existence of the limit follows from Prop.~1.4 of \cite{lenz2020pure}, as for compactly supported continuous $\varphi$ the function $\varphi \ast \omega$ is pattern equivariant.

The Fourier transform $\hat\gamma_\omega$ of the autocorrelation, defined in the distributional sense,
turns out to be a translation bounded positive measure \cite{baake2013aperiodic,lenz2020pure}. This is the diffraction measure. As a measure on $\widehat{\R^d}$ (the Pontrayagin dual of $\R^d$)  it can be decomposed into its pure point, singular continuous and absolutely continuous part w.r.t.\ the Lebegue measure. By definition, the Bragg spectrum is thus the set $\{\chi\in\hat \R^d | \hat\gamma_\omega(\{\chi\})\neq 0\}$. As is customary in the physical literature (and also in \cite{baake2013aperiodic}) we use the isomomorphism ${\R^d}^*\ni k \mapsto \chi_k \in \widehat{\R^d}$
$$\chi_k(x) = e^{i\langle k,x\rangle}$$
 to identify the Pontrayagin dual with $k$-space, the vector space dual of $\R^d$, and write the Bragg spectrum as
$$\mathcal B(\omega) = \{k\in{\R^d}^* | \hat\gamma_\omega(\{\chi_k\})\neq 0\}.$$ 

To some degree, $\mathcal B(\omega)$ is robust to change of the local charge distributions $\rho_a$ and depends only on $\pe$.  As long as Bragg peaks do not extinct, $\mathcal B$ remains the same.  This is a simple consequence of the fact that $\rho_a$ contributes only through multiplication with $\hat \rho_a$ to the pure point part of $\hat \gamma_\omega$. This has some simililarity with the gap labelling, where a gap label does not change if the gap does not close.

Interesting questions to ask are, how large is $\mathcal B(\omega)$, and how large can it possibly be? While the first question can be answered only through analytical calculations (the actual calculation that the intensity of a potential peak is non-zero), the second question can be answered by more global techniques, here coming from dynamical systems theory. 

\subsection{The dynamical spectrum relevant for diffraction} 
It is one of the corner stones of diffraction theory that the Bragg spectrum is related to the eigenvalues of the dynamical spectrum of $\pe$. This connection, which is also referred to as Dworkin's argument \cite{dworkin1993spectral}, has been developed by \cite{hof1995diffraction} and then many other people, 
and found a very general formulation for ergodic translation bounded measure dynamical systems on locally compact abelian groups in \cite{baake2004dynamical}. We present it here in a form adapted to our framework which is dual to the one in which the translation action on the pattern space $\Omega_\pe$ is considered.

We consider the GNS representation of $C_\pe(\R^d)$ w.r.t.\ the state $\mean$. Its Hilbert space is the completion of $C_\pe(\R^d)$ under the seminorm defined by
$$\| f \|^2 := \mean(|f|^2)$$
and we will denote it by $L^2_\pe(\R^d,\mean)$. This completion involves a quotient if $\mean$ is not faithful. We provide a criterion for that. Denote by $[f]_\mean$ the class of $f$ in  $L^2_\pe(\R^d,\mean)$.
\begin{lemma} 
Suppose that $\pe$ is relatively dense.
$\mean$ is faithful on $C_\pe(\R^d)$ if and only if all patch classes have non-zero frequency.
In that case we may view $C_\pe(\R^d)$ as a subspace of $L^2_\pe(\R^d,\mean)$.
\end{lemma}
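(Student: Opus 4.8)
The plan is to note first that the closing assertion — that, when $\mean$ is faithful, $C_\pe(\R^d)$ sits inside $L^2_\pe(\R^d,\mean)$ — is a mere reformulation: the map $f\mapsto[f]_\mean$ has kernel $\{f\in C_\pe(\R^d):\mean(|f|^2)=0\}$, which is zero exactly when $\mean$ is faithful. So the work is in the equivalence, and for that I would pass to the pattern space. Since $C_\pe(\R^d)$ is a unital commutative $C^*$-algebra with Gelfand spectrum $\Omega_\pe$, write $C_\pe(\R^d)\cong C(\Omega_\pe)$ and represent the state $\mean$, via Riesz, by a Borel probability measure $\mu$ on $\Omega_\pe$ (translation invariant, by the observation in the text), with $\mean(|f|^2)=\int_{\Omega_\pe}|\hat f|^2\,d\mu$. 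Then $\mean$ is faithful on $C(\Omega_\pe)$ if and only if $\mu$ has full support, i.e.\ $\mu(W)>0$ for every nonempty open $W\subseteq\Omega_\pe$; this is the standard correspondence between faithful states on commutative $C^*$-algebras and full-support Radon measures.

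The second step is to test full support on patches. Writing $\Omega_\pe=\varprojlim_R\lang^R$ with canonical projections $\pi_R:\Omega_\pe\to\lang^R$, the sets $\pi_R^{-1}(U)$ with $U\subseteq\lang^R$ open form a basis of the topology of $\Omega_\pe$ (using directedness of the index set), so full support amounts to $\mu(\pi_R^{-1}(U))>0$ for every such nonempty $U$. Because $\lang^R$ is, by construction, the closure of the set $\lang^R_0$ of genuine $R$-patch classes, any nonempty open $U$ contains some $\Pp\in\lang^R_0$, whence $\mu(\pi_R^{-1}(U))\geq\mu(C_\Pp)$ with $C_\Pp:=\pi_R^{-1}(\{\Pp\})$. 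The identity one must pin down here is $\mu(C_\Pp)=\nu(\Pp)$, the frequency of $\Pp$: this is where relative density of $\pe$ enters, ensuring that $\pe_\Pp\ne\emptyset$ for $\Pp\in\lang^R_0$ and that the frequency computed through $\kappa\ast\delta_{\pe_\Pp}$ agrees with the density of $\pe_\Pp$, hence with the $\mu$-mass of its cylinder. Granting it, ``all patch frequencies positive'' forces every $\mu(\pi_R^{-1}(U))\geq\nu(\Pp)>0$, so $\mu$ has full support and $\mean$ is faithful.

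For the reverse implication I would start from a genuine patch class $\Pp\in\lang^R_0$ with $\nu(\Pp)=0$ and build a nonzero element in the kernel of $\mean$. Taking continuous $\chi_\epsilon:\lang^R\to[0,1]$ with $\chi_\epsilon(\Pp)=1$ and support inside the $\epsilon$-ball around $\Pp$, the function $f_\epsilon=\chi_\epsilon(B_R[\pe;\cdot])$ is in $C_\pe(\R^d)$, is nonzero (it takes the value $1$ at any $x_0$ with $B_R[\pe;x_0]\in\Pp$, and such $x_0$ exists), and $\mean(|f_\epsilon|^2)\le\mu(\{\hat f_\epsilon\ne0\})\downarrow\mu(C_\Pp)=\nu(\Pp)=0$. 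To turn this into a genuine $f\ne0$ with $\mean(|f|^2)=0$ one observes that the null set $\{f:\mean(|f|^2)=0\}$ is a closed ideal of $C_\pe(\R^d)$, equivalently that $\mathrm{supp}\,\mu$ is a closed subset of $\Omega_\pe$, and argues that it cannot exhaust $\Omega_\pe$ once a genuine patch class has zero frequency. When $\lang^R$ is discrete — in particular under finite local complexity, where $\{\Pp\}$ is clopen and $f=\mathbf{1}_{C_\Pp}$ works at once — this is immediate; in general one has to exhibit a neighbourhood of the point of $\Omega_\pe$ lying over $\Pp$ that $\mathrm{supp}\,\mu$ avoids.

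The main obstacle is exactly this last point together with the bookkeeping it rests on: establishing $\mu(C_\Pp)=\nu(\Pp)$ in the generality at hand, and converting ``a single genuine patch class has vanishing frequency'' into ``a nonempty open subset of $\Omega_\pe$ is $\mu$-null'' away from the discrete case — this is where the hypothesis that $\pe$ be relatively dense has to do its work.
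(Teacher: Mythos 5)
Your reduction of the last sentence to faithfulness and the translation ``faithful $\Leftrightarrow$ $\mu$ has full support'' are fine, but the identity on which both directions of your scheme rest, $\mu(C_\Pp)=\nu(\Pp)$ for $C_\Pp=\pi_R^{-1}(\{\Pp\})$, is not just left unproved --- it is false. The measure $\mu$ is invariant under the $\R^d$-action, and $C_\Pp$ is the set of hull points whose $R$-patch sits \emph{exactly} centered at the origin; this is a codimension-$d$ (transversal-type) set and has $\mu$-measure zero no matter what the frequency is (note also the dimensional mismatch: $\nu(\Pp)$ is a density, of dimension $\mathrm{length}^{-d}$, while $\mu(C_\Pp)$ is a pure number). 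The correct relation involves translation-thickened cylinders, $\mu\bigl(\pi_R^{-1}(B_\epsilon(\Pp))\bigr)\approx c\,\nu(\Pp)\,\epsilon^d$, and proving that is essentially the same work as the direct computation you were trying to avoid. As a consequence your forward direction gives only the vacuous bound $\mu(\pi_R^{-1}(U))\geq 0$, and your converse cannot work either: since $\mu(C_\Pp)=0$ also for classes of \emph{positive} frequency, the limit $\mean(|f_\epsilon|^2)\downarrow\mu(C_\Pp)=0$ detects nothing, and you indeed never produce a single nonzero $f$ with $\mean(|f|^2)=0$. The FLC shortcut is also wrong: finite local complexity makes the discrete transversal $\Xi_{\LS}$ totally disconnected, not $\lang^R$ or $\Omega_\pe$; already for $\pe=\Z$ the space $\lang^R$ is a circle of patch classes, $\{\Pp\}$ is not open and $\mathbf{1}_{C_\Pp}$ is not continuous.

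Both directions have short direct proofs, which is the route the paper takes, staying on $\R^d$ and never passing to the Gelfand spectrum. For ``some frequency vanishes $\Rightarrow$ $\mean$ not faithful'': if $\nu(\Pp)=0$ take $f=\kappa\ast\delta_{\pe_\Pp}$; this is a continuous, strongly pattern equivariant, nonnegative and nonzero function (the class $\Pp$ occurs by definition), and $\mean(f)=\nu(\Pp)=0$, hence $\mean(|f|^2)\leq\|f\|_\infty\,\mean(f)=0$. For the other direction it suffices (by sup-norm continuity of $\mean$) to treat a positive strongly pattern equivariant $f$ of radius $R$: relative denseness gives $M$ with $\LS+B_M(0)=\R^d$, the quantity $\int_{B_M(p)}f$ depends only on the $(R+M)$-patch class of $\pe$ at $p\in\LS$, and $\mean(f)$ decomposes as a sum over these classes weighted by their frequencies; if all frequencies are nonzero, $\mean(f)=0$ forces $f$ to vanish on every ball $B_M(p)$, $p\in\LS$, hence everywhere. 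This is where relative denseness does its work --- not, as you guessed, in securing $\pe_\Pp\neq\emptyset$ or an identification of cylinder measures with frequencies.
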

\begin{proof} 
As $\mean$ is continuous on $C_\pe(\R^d)$ (w.r.t.\ the sup-norm topology) it suffices to show that it is faithful on the dense subalgebra of strongly pattern equivariant functions. Let $f$ be a positive function which is strongly pattern equivariant with radius $R$. 
By relative denseness there is $M$ be such that $\pe+B_M(0)$ covers $\R^d$. Let $\{\mathcal P_i\}_{i\in I}$ be the (at most countable) set of $R+M$-patch classes which have a representative with center in $\pe$. Then $\int_{B_{M}(p)} f(x) dx $ is the same for all $p\in \pe_{\mathcal P_i}$. 
Therefore, if we choose for each $i$ a $p_i\in \pe_{\mathcal P_i}$ we get
$$\mean(f) = \sum_{i\in I} \nu(\mathcal P_i) \frac1{|B_{M}(p_i)|} \int_{B_{M}(p_i)} f(x) dx $$
where $\nu(\mathcal P_i)$ is the frequency of $\mathcal P_i$. Thus $\mean(f)=0$ implies $f(x)=0$ for all $x\in B_M(p)$, $p\in \pe_{\mathcal P_i}$ for which  $\nu(\mathcal P_i)\neq 0$. If all frequencies are non-zero then $f=0$, as $\pe+B_M(0)$ covers $\R^d$.

As for the converse, if $\nu(\mathcal P)=0$ then $f = \kappa\ast \delta_{\pe_P}$ is a strictly positive function for which $\mean(f)= 0$, where $\kappa$ is a positive function with compact support.
\end{proof}
$L^2_\pe(\R^d,\mean)$ does not only carry the action of $C_\pe(\R^d)$ induced by left multiplication, but also a unitary representation of $\R^d$ by translation, $$U_a[f]_m = [f(\cdot + a)]_m.$$ The spectrum of this representation is the dynamical spectrum relevant for diffraction.

The $C^*$-dynamical system with its state $(C_\pe(\R^d),\R^d,\alpha,\mean)$ is dual in the sense of Gelfand to a measure dynamical system $(\Omega_\pe,\R^d,\alpha,\mu)$ where, as already mentionned, $\Omega_\pe$ is the Gelfand spectrum of the algebra and $\mu$ the probability measure on $\Omega_\pe$ corresponding via Riesz' theorem to the normalised state $\mean$. Under this duality, the Hilbert space $L^2_\pe(\R^d,\mean)$ corresponds to $L^2(\Omega_\pe,\mu)$, the completion of $C(\Omega_\pe)$ w.r.t.\ the semi-norm given by $\|f\|^2=\int_{\Omega_\pe} |f|^2 d\mu$.

\subsection{Diffraction to dynamics map}
The map relating diffraction to dynamics combines the Fourier-transform 
$$\hat \varphi(\chi) = \int_{\R^d} \overline{\chi(x)} \varphi(x) dx $$
with convolution by the density $\omega$. We denote by $\mathcal S(\hat\R^d)$ the space of Schwarz-functions on $\hat\R^d$.
\begin{theorem}[\cite{moody2008recent,baake2004dynamical}]\label{thm-d2d}
Let $\omega\in C_\pe(\R^d)$ and $\gamma_\omega$ be its autocorrelation (defined via the same van Hove sequence as $\mean$) 
The map $\Theta_\rho : \mathcal S(\hat\R^d) \to L^2_\pe(\R^d,\mean)$,
$$\Theta_\omega (\hat \varphi)  = \varphi\ast \omega $$
extends by continuity to an isometry 
$$\Theta_\omega : L^2(\hat\R^d,\hat\gamma_\omega) \to L^2_\pe(\R^d,\mean)$$
which intertwines the $\R^d$-action on $L^2(\hat\R^d,\hat\gamma_\omega)$ given by $\hat U_a \hat\varphi(\chi) = \chi(a) \hat\varphi(\chi)$ with the translation action $U_a$ on $L^2_\pe(\R^d,\mean)$. 
\end{theorem}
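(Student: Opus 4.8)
The plan is to prove the isometry by computing both norms on the dense subspace $\mathcal S(\hat\R^d)\subset L^2(\hat\R^d,\hat\gamma_\omega)$ and then extending by continuity; density holds because $\hat\gamma_\omega$ is a translation bounded, hence locally finite, measure, so that $C_c^\infty(\hat\R^d)$ is dense in $L^2(\hat\R^d,\hat\gamma_\omega)$. Two preliminary points are needed. First, $\Theta_\omega(\hat\varphi)=\varphi\ast\omega$ really lands in $L^2_\pe(\R^d,\mean)$: for compactly supported continuous $\varphi$ this is the pattern equivariance recalled just before the theorem, and for Schwartz $\varphi$ one approximates in $L^1$ by compactly supported functions $\varphi_N$, using $\|\varphi\ast\omega-\varphi_N\ast\omega\|_\infty\le\|\varphi-\varphi_N\|_{1}\|\omega\|_\infty$ together with the closedness of $C_\pe(\R^d)$ in the sup norm. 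Second, the intertwining property is immediate on this subspace, since $e^{ika}\hat\varphi(k)$ is the Fourier transform of $\varphi(\cdot+a)$, whence $\Theta_\omega(\hat U_a\hat\varphi)=\varphi(\cdot+a)\ast\omega=(\varphi\ast\omega)(\cdot+a)=U_a\Theta_\omega(\hat\varphi)$; it then passes to the completion by continuity of all maps involved.

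For the norm identity I would pass through the autocorrelation of $\psi:=\varphi\ast\omega$. Writing the seminorm as $\|\psi\|^2_{L^2_\pe(\R^d,\mean)}=\mean(|\psi|^2)=\gamma_\psi(0)$, where $\gamma_f(u)=\mean\bigl(x\mapsto f(x+u)\overline{f(x)}\bigr)$ is the autocorrelation function (this matches the truncated-convolution definition of $\gamma_\omega$ in the excerpt, the boundary contributions vanishing in the van Hove limit), one expands $\psi(x+u)\overline{\psi(x)}$ into a double convolution integral in $\varphi$ and pushes $\mean$ through it. For compactly supported $\varphi$ this is legitimate because $\psi$ is a uniform limit of finite linear combinations of translates of $\omega$ and $\mean$ is continuous for the sup norm; translation invariance of $\mean$ then collapses the inner average to $\gamma_\omega$, giving $\gamma_\psi=A\ast\gamma_\omega$ with $A(r)=\int\varphi(s)\overline{\varphi(s-r)}\,ds$, and the Schwartz case follows by approximation. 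Since $\hat A=|\hat\varphi|^2\ge 0$ and $\hat\gamma_\omega$ is the diffraction measure, one obtains $\widehat{\gamma_\psi}=|\hat\varphi|^2\,\hat\gamma_\omega$, which is a finite measure because $\hat\varphi$ is Schwartz and $\hat\gamma_\omega$ is translation bounded. Fourier inversion at $0$ then yields
\[
\|\varphi\ast\omega\|^2_{L^2_\pe(\R^d,\mean)}=\gamma_\psi(0)=\int_{\hat\R^d}|\hat\varphi(k)|^2\,d\hat\gamma_\omega(k)=\|\hat\varphi\|^2_{L^2(\hat\R^d,\hat\gamma_\omega)},
\]
the normalisation constants coming from the Fourier transform being absorbed into the conventions for the Fourier transform and the diffraction measure.

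The main obstacle is the bookkeeping hidden in the middle step: making rigorous both the exchange of $\mean$ — which a priori is only a weak-$*$ subsequential limit of van Hove averages — with the convolution integrals, and the identity $\gamma_{\varphi\ast\omega}=A\ast\gamma_\omega$. Both come down to van Hove boundary estimates, controlled by the uniform discreteness and relative denseness of $\pe$ together with the existence of the autocorrelation, which is exactly the content of Prop.~1.4 of \cite{lenz2020pure} once one knows that $\varphi\ast\omega$ is pattern equivariant. The remaining ingredients — that $\hat\gamma_\omega$ is an honest (translation bounded) measure, so that $|\hat\varphi|^2\hat\gamma_\omega$ is finite and Fourier inversion at $0$ applies, and the density of Schwartz functions in $L^2(\hat\R^d,\hat\gamma_\omega)$ — are standard facts about translation bounded positive measures and the diffraction measure.
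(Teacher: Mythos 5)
Your proposal is correct and takes essentially the same route as the paper: both reduce to the Plancherel-type identity $\mean(|\varphi\ast\omega|^2)=(\tilde\varphi\ast\varphi\ast\gamma_\omega)(0)=\int|\hat\varphi|^2\,d\hat\gamma_\omega$ on a dense space of test functions, justify exchanging the van Hove mean with the convolution by the same estimate (Prop.~1.4 of \cite{lenz2020pure}), and get equivariance from the Fourier transform intertwining translations and modulations. The differences are cosmetic: you compute the norm and invoke Fourier inversion of the finite measure $|\hat\varphi|^2\hat\gamma_\omega$ at $0$, whereas the paper computes the polarized inner product directly from the distributional definition of $\hat\gamma_\omega$, and you additionally spell out the routine density and well-definedness points that the paper leaves implicit.
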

\begin{proof} We adapt the arguments of \cite{baake2004dynamical} to our framework. Let $ \varphi, \psi \in \mathcal S(\hat\R^d)$.
We have
$$\widehat{\overline{\hat \varphi} \hat \psi}(x) = \tilde\phi\ast \psi (-x)$$ and therefore
$$\int \overline{\hat \varphi}(\chi) \hat \psi(\chi) d\hat\gamma_\omega(\chi) = \int \tilde\phi\ast \psi (-x)d\gamma_\omega(x) = \tilde\phi\ast \psi \ast\gamma_\omega(0).$$
Furthermore
$$\langle \Theta_\omega(\hat \varphi),\Theta_\omega(\hat \psi)\rangle = \mean(\overline{\varphi\ast\omega}\cdot 
\psi \ast \omega) = \tilde \varphi \ast \psi \ast \gamma_\omega (0)$$
where the second equality follows from the fact that $\varphi$ and $\psi$ decay fast, so that we can replace $\left.\varphi\ast\omega\right|_{\Lambda_n}$ by  $\left.(\varphi\ast\omega)\right|_{\Lambda_n}$ the difference vanishing in the limit $n\to \infty$ due to the van Hove property, see Prop.~1.4 of \cite{lenz2020pure} for more details.

$\R^d$-equivariance follows from the fact that the Fourier transform intertwines the two actions. 
\end{proof}
\begin{definition} A vector $k\in {\R^d}^*$ is called an eigenvalue of the $\R^d$-action on $L^2_\pe(\R^d,\mean)$ if there is a non-zero class $[f]_\mean\in L^2_\pe(\R^d,\mean)$ such that, for all $a\in\R^d$,
$$ U_a[f]_\mean = e^{i\langle k,a\rangle} [f]_\mean $$
We denote the eigenvalues by $\mathcal E_\pe$. 

A vector $k\in {\R^d}^*$ is called a topological eigenvalue of the $\R^d$-action on $C_\pe(\R^d)$ if there is a non-zero  function $f \in C_\pe(\R^d)$ such that
$$ U_af = e^{i\langle k,a\rangle} f $$ 
for all $a\in\R^d$.
We denote the topological eigenvalues by $\Et$. 
\end{definition} 
Eigenvalues of the $\R^d$-action on $L^2_\pe(\R^d,\mean)$ are also referred to as measurable eigenvalues of the dynamical system associated to the solid. Note that topological eigenvalues do not depend on the choice of $\mean$. Indeed, they are characterised by the fact that the plane wave $x\mapsto e^{i\langle k,x\rangle}$ is pattern equivariant:
\begin{lemma}
$k\in {\R^d}^*$ is a topological eigenvalue iff the map $x\mapsto e^{i\langle k,x\rangle}$ is pattern equivariant.
\end{lemma}
\begin{proof} Let $k$ be a topological eigenvalue. Then there exists $f\in C_\pe(\R^d)$ and $x_0\in\R^d$ such that $f(x_0)\neq 0$ and $U_af = e^{ika} f$ for all $a\in\R^d$. Then $e^{i\langle k,x\rangle} = f(x_0)^{-1}f(x+x_0)$ which clearly belongs to $C_\pe(\R^d)$. As for the converse, a pattern equivariant plane wave is a continuous eigenfunction to its wave vector.
\end{proof}
The topological eigenvalues of $\pe$ characterise $\pe$ to quite some extend \cite{aujogue2015equicontinuous}, for instance, if $\pe$ has  $d$ linearly independent topological eigenvalues then it satisfies the Meyer property up to  topological conjugacy \cite{kellendonk2014meyer}. 

Theorem~\ref{thm-d2d} provides us with a map from the Bragg spectrum $\Bb(\omega)$ to $\mathcal E_\pe$. This simply follows from the fact that if $\hat\gamma(\{\chi_k\})\neq 0$ then the $L^2$-class of the indicator function $1_{\{\chi_k\}}$ is non-zero, hence also 
 $\Theta_\omega([1_{\{\chi_k\}}])$ and, by equivariance
$$U_a \Theta_\omega([1_{\{\chi_k\}}]) = \Theta_\omega(\hat U_a [1_{\{\chi_k\}}]) = e^{i\langle k,a\rangle} \Theta_\omega([1_{\{\chi_k\}}]).$$
So $\Bb(\omega)$ is a subset of $\mathcal E_\pe$. It might not be all of $\mathcal E_\pe$ but the question of when this is the case depends on the specific choice of the density $\omega$ and is not of topological nature.

$\mathcal E_\pe$ and $\Et$ are groups, as the product of two eigenfunctions yields an eigenfunction to the sum of their corresponding eigenvalues. Furthermore, $\Et$ is a subgroup of $\mathcal E_\pe$, for
$k \mapsto  \left[e^{i\langle k,x\rangle}\right]_\mean$ is injective, as $\mean(|e^{i\langle k,x\rangle}-e^{i\langle k',x\rangle}|^2)= 4\mean(\sin^2((k-k')x))>0$ if $k\neq k'$. For many long range ordered structures, like quasiperiodic ones or those defined by substitution rules, $\Et$ coincides with $\mathcal E_\pe$. Otherwise, if $\mean$ is faithful one has the following criterion for an eigenvalue to be topological.
\begin{lemma} Suppose that $\mean$ is faithful.
If $\Theta_\omega([1_{\{k\}}])$ has a continuous representative then $k$ is a topological eigenvalue and the representative is given by a pattern equivariant plane wave 
$x\mapsto \hat\gamma(\{\chi_k\}) e^{i\langle k,x-x_0\rangle}$
with some $x_0\in \R^d$. 
\end{lemma}
\begin{proof}
Let $f$ be continuous and $\Theta_\omega(1_{\{\chi_k\}}) =[f]_\mean$. Then, for all $x$, 
$0=(U_x-\chi_k(x)) [f]_m = [U_x f - \chi_k(x) f]_m$. Hence $\mean(|U_x f - \chi_k(x) f|)=0$. As $\mean$ is faithful, 
$U_x f - \chi_k(x) f=0$. Hence $k$ is a topological eigenvalue. Now $f(y+x) = e^{i\langle k,x\rangle} f(y)$ shows that $|f(x)|$ is constant. Hence $|f(0)|^2 = \mean(|f|^2) = \|1_{\{\chi_k\}}\|^2$, the latter as $\Theta_\omega$ is a isometry. 
Finally $\|1_{\{\chi_k\}}\|^2= \hat\gamma(\{\chi_k\})^2 $. 
\end{proof}
We point out that under the criterion of the last lemma, the amplitude $\hat\gamma(\{\chi_k\})$ of the Bragg peak at $k$ can be computed using the Bombieri-Taylor formula \cite{lenz2009continuity}. 
\section{From topological eigenvalues to Chern numbers}\label{sec-main}
We now relate topological eigenvalues to the $K$-group of the pattern algebra $\Ape$ and its image under Connes' pairing with certain naturally defined cyclic cocycles. 
This relation is based on the fact that $\chi_k(x) = e^{i\langle k,x\rangle}$ is a pattern equivariant function and hence a unitary of $C_\pe(\R^d)$. We therefore obtain 
a group homomorphism
$\Et \stackrel{\varphi}\to K_1(C_\pe(\R^d))$, 
$$\varphi(k) = [\chi_k]_1$$
which will be combined with the Connes Thom isomorphism to obtain a map into  the observable algebra $\Ape$. Our aim is to fill in the details and prove diagram (\ref{eq-main-diagram}) which we will do by decomposing it in two commuting diagrams. But first we need some more background material.

\subsection{$K$-theory of crossed products with $\R^d$}
One of the fundamental results in $K$-theory which is of importance in this work is Connes' isomorphism \cite{connes1981analogue,blackadar1998k}. 
Given a $C^*$-algebra $A$ with an action $\alpha$ of $\R$ by automorphisms $\alpha^t$, $t\in \R$ such that for all $a\in A$ the map $\R\ni t\to \alpha^t(a)$ is continuous, the Connes Thom isomorphism relates the $K$-theory of $A$ to the $K$-theory of the crossed product $A\rtimes_\alpha\R$,
$$ K_i(A) \stackrel{\phi_\alpha}{\to} K_{i+1}(A\rtimes_\alpha \R) .$$
It is uniquely determined by its functorial properties and the requirement that $\phi_{\id}$ maps the generator of $K_0(\C)$ to the generator of $K_1(\C\rtimes_{\id}\R)$ (this involves a choice of a orientation). 
It can also be obtained as (minus) the boundary map of a smooth Toeplitz extension, or as (minus) the inverse of the boundary map of the Wiener Hopf extension, see \cite{schulz2022harmonic}. 

If $A$ carries a (strongly continuous) action $\alpha$ of $\R^d$ then the crossed product $A\rtimes_\alpha\R^d$ is isomorphic to an iterated crossed product with $\R$. Indeed, choosing a basis $\Bb=(e_1,\cdots,e_d)$ of $\R^d$ we obtain a collection of $d$ commuting $\R$-actions $\alpha_1,\cdots,\alpha_d$, where $\alpha_i^t := \alpha^{t e_i}$. We assume that the parallel-epiped defined by $(e_1,\cdots,e_d)$ has Lebesgue measure $1$. 
Then 
$\Psi_\Bb:A\rtimes_\alpha\R^d \to A\rtimes_{\alpha_1}\R \cdots\rtimes_{\alpha_d}\R$
defined on $F\in C_c(\R^d,A)$ by
$$\Psi_\Bb(F)(t_d)\cdots(t_1) := F(\textstyle \sum_i t_i e_i)$$
extends by continuity to an isomorphism of $C^*$-algebras.
Note that, if $d=1$ so that $\Bb=(e_1)$ then ${\Psi_{(-e_1)}}_* = -  {\Psi_{(e_1)}}_*$ as the base change $e_1\mapsto -e_1$ changes the orientation. Likewise 
exchanging the order of two actions changes the orientation and so we obtain
\begin{equation}\label{eq-B-Thom}
{\Psi_\Bb}_* = \det(\Bb,\Bb'){\Psi_{\Bb'}}_*
\end{equation}
where $\det(\Bb,\Bb')$ is the determinant of the change of basis which must be $\pm 1$ as the parallel-epiped defined by the basis must have Lebesgue measure $1$. It follows that  ${\Psi_\Bb}_*$ depends only on the orientation $e:=e_1\wedge\cdots\wedge e_d$ and we can denote it as ${\Psi_e}_*$.

Accordingly, we may iterate the Connes Thom isomorphism and define
$$\phi_\alpha^e := {\Psi_e}_*^{-1}\phi_{\alpha_d}\cdots \phi_{\alpha_1} $$
to obtain an isomorphism 
$\phi_\alpha^e : K_i(A) \to K_{i+d}(A\rtimes_{\alpha}\R^d)$.

\subsection{Chern cocycles}
We recall the definition of certain cocycles which appear in the context of solid state theory and their pairing with $K$-theory. A good summary of the material can be found in the book \cite{schulz2022harmonic} which we follow in all its conventions.

As above we consider a $C^*$-algebra $A$ with an action $\alpha$ of $\R^d$ which we decompose into a collection of $d$ commuting $\R$-actions $\alpha_1,\cdots,\alpha_d$. These actions give rise to densely defined derivations 
$$\nabla_i(a) :=-\frac1{2\pi}\left.  \partial_t{\alpha^t_{i}(a)} \right|_{t=0}. $$
Let $\Tt$ be a densely defined l.s.c.\ trace which is invariant under the actions, $\Tt\circ\alpha_i^t =\Tt$. Let $A_{\Tt,\alpha}$ be the sub-algebra of elements of $A$ which are smooth w.r.t.\ the derivations $\nabla_i$.
The Chern cocycle defined by the actions $\alpha_1,\cdots,\alpha_d$ and the trace $\Tt$ 
is the multilinear map $\mathrm{Ch}_{\Tt,\alpha_1,\cdots,\alpha_d}:{A_{\Tt,\alpha}}^{d+1}\to \C$ 
given by
$$\mathrm{Ch}_{\Tt,\alpha_1,\cdots,\alpha_d}(a_0,\cdots,a_d) = c_d \sum_{\sigma\in S_d} \mathrm{sgn}(\sigma) \Tt(a_0 \nabla_{\sigma(1)} (a_1) \cdots \nabla_{\sigma(d)} (a_d))$$
where
$$c_d =\left\{ \begin{array}{ll}
\frac{(2\pi i)^k}{k!},& \mathrm{if }\: d=2k \\ 
\frac{i(\pi i)^k}{(2k+1)!!},& \mathrm{if }\:  d=2k+1
\end{array}\right.$$
Note that $\mathrm{Ch}_{\Tt,\alpha_1,\cdots,\alpha_d}$ depends on the choice of base. 
It extends to $A_{\Tt,\alpha}$-valued matrices $M_n(A_{\Tt,\alpha})\cong A_{\Tt,\alpha}\otimes M_n(\C)$ by extending the actions (and hence derivations) entrywise $\alpha_i\otimes \id$ and extending $\Tt$ using the matrix trace $\Tt\otimes\Tr$. 
The Connes pairing between $\mathrm{Ch}_{\Tt,\alpha_1,\cdots,\alpha_d}$ and an element $[x]_i$ of 
$K_i(A_{\Tt,\alpha})$ is defined as follows. 
If $i$ is even, then $[x]_i$ has the form $[p]_0-[s(p)]_0$ where $p$ is a projection in $M_n({A_{\Tt,\alpha}}^+)$ and
$$\langle \mathrm{Ch}_{\Tt,\alpha_1,\cdots,\alpha_d} , [p]_0-[s(p)]_0\rangle = 
\mathrm{Ch}_{\Tt,\alpha_1,\cdots,\alpha_d}(p-s(p),\cdots, p- s(p))$$
If $i$ is odd, then $K_i(A)$ is given by elements of the form $[u]_1$ where 
$u$ is a unitary in $M_n({A_{\Tt,\alpha}}^+)$ and
$$\langle \mathrm{Ch}_{\Tt,\alpha_1,\cdots,\alpha_d} , [u]_1\rangle = 
\mathrm{ch}^{(n)}_{\Tt,\alpha}(u^*,u,\cdots, u^*,u)$$

There is a isomorphism in cyclic cohomology \cite{elliott1988cyclic} which is dual to the Connes Thom isomorphism $\phi_{\alpha_i}:K_n(A) \to K_{n+1}(A\rtimes_\alpha \R)$.  This involves the dual action $\hat\alpha_i$ and the dual trace $\hat\Tt$ on the crossed product. On the dense sub-algebra $C_c(\R,A)$ (with $\alpha_i$-twisted convolution product) they are defined by 
$$\hat\alpha_i^s(f) = \chi_{se^i} f,\quad \hat\Tt(f) = \Tt(f(0)).$$
Here $e^i$ is the corresponding element of the dual basis.
In particular, the corresponding dual derivation is given by 
$$\hat \nabla_i f = -\frac1{2\pi}\left. \partial_s\chi_{se^i} \right|_{s=0} f .$$
The original action $\alpha_i$ extends to the crossed product via $\alpha_i^s(f)(t)=\alpha_i^s(f(t))$, and this extension commutes with $\hat\alpha_i$. 

Given $1\leq j,n\leq d$ we compare now three Chern cocyles. The first is 
$ \mathrm{Ch}_{\Tt,\alpha_1,\cdots,\alpha_{n}}$, the Chern cocycle on 
$A_{\Tt,\alpha_1,\cdots,\alpha_{n}}$ defined by $n$ of the $d$ commuting actions. The other two Chern cocycles are defined on the smooth subalgebra $(A\rtimes_{\alpha_n}\R)_{\hat\Tt,\alpha_1,\cdots,\alpha_{n-1},\hat\alpha_n}$ of the crossed product. Namely we consider $\mathrm{Ch}_{\hat\Tt,\alpha_{1},\cdots,\alpha_n,\hat\alpha_j}$ the Chern cocycle 
 defined by the $n+1$ commuting actions $\alpha_{1},\cdots,\alpha_n,\hat\alpha_j$ of which the last is a dual action and where we have extended the other actions to the crossed product $A\rtimes_{\alpha_n}\R$, and finally, if $j=n$, $\mathrm{Ch}_{\hat\Tt,\alpha_{1},\cdots,\alpha_{n-1}}$, the Chern cocycle defined by the first $n-1$ of the $n$ commuting actions extended to the crossed product. These Chern cocycles are related via the Connes Thom isomorphism through the following theorem.
\begin{theorem} \label{thm-SS}
Let $A$ be a $C^*$-algebra with $d$ commuting $\R$ actions 
$\alpha_1,\cdots,\alpha_d$ and $\alpha_i$-invariant faithful densely defined l.s.c.\ trace $\Tt$. Let $1\leq j,n\leq d$. 
Let $\hat\alpha_j$ be the dual action on
the crossed product $A\rtimes_{\alpha_j}\R$ and  $\hat\Tt$ the dual trace. Then
$$\langle \mathrm{Ch}_{\Tt,\alpha_{1},\cdots,\alpha_n},[x]_i\rangle = (-1)^{n}\langle \mathrm{Ch}_{\hat\Tt,\alpha_{1},\cdots,\alpha_n,\hat\alpha_j},\phi_{\alpha_j}[x]_i\rangle$$ 
where $[x]_i\in K_{i}(A)$.
If furthermore $j=n$ then also
$$\langle \mathrm{Ch}_{\Tt,\alpha_{1},\cdots,\alpha_n},[x]_i\rangle = (-1)^{n+1}\langle \mathrm{Ch}_{\hat\Tt,\alpha_{1},\cdots,\alpha_{n-1}},\phi_{\alpha_n}[x]_i\rangle$$
\end{theorem}
\begin{proof} The two equations are Theorem~4.5.2 and Theorem~4.5.3 of 
\cite{schulz2022harmonic} together with the identification of the Connes Thom isomorphism with (minus) the inverse of the connecting map of the Wiener Hopf extension for the first, and with (minus) the connecting map of the smooth Toeplitz  extension. 
\end{proof}
We aim to write the above in a way which is invariant under a change of basis of $\R^d$ which preserves the orientation. For that we collect the derivations into the differential $\delta:A_{\Tt,\alpha}\otimes  \Lambda{\R^*}^d\to A_{\Tt,\alpha}\otimes  \Lambda{\R^*}^d$
$$ \delta(a\otimes\lambda) = \sum_{i=1}^d \nabla_i (a)\otimes e^i\wedge\lambda.$$
It is invariant under a base change. 
The dual action is given by $\hat \alpha^k f = \chi_k f$ with $\chi_k(x) = e^{i\langle k,x \rangle}$ and hence its components have to be expressed in the dual basis, $\hat \alpha_i^t = \hat\alpha^{te^i}$. The corresponding differential is 
$\hat\delta:(A\rtimes_\alpha\R^d)_{\hat\Tt,\hat\alpha}\otimes  \Lambda{\R^*}^d\to (A\rtimes_\alpha\R^d)_{\hat\Tt,\hat\alpha}\otimes  \Lambda{\R}^d$
$$ \hat\delta (f\otimes\lambda) = \sum_{i=1}^d \hat \nabla_i (f)\otimes e_i\wedge\lambda$$
which we can also write as
\begin{equation}\label{eq-deltahat}
 \hat\delta (f\otimes\lambda)(x) = \imath f(x)\otimes x\wedge\lambda
\end{equation}
also invariant under a base change.
Define, for $n\leq d$, 
$\mathrm{ch}^{(n)}_{\Tt,\alpha} : {A_{\Tt,\alpha}}^{n+1}\to \Lambda{\R^*}^d$ ,
$$\mathrm{ch}^{(n)}_{\Tt,\alpha}(a_0,\cdots,a_n) =  \Tt\otimes\id(a_0 \delta a_1 \cdots \delta a_n)$$
and
$\mathrm{ch}^{(n)}_{\hat\Tt,\hat\alpha} : (A\rtimes_\alpha\R^d)_{\hat\Tt,\hat\alpha}^{n+1}\to \Lambda{\R}^d$ ,
$$\mathrm{ch}^{(n)}_{\hat\Tt,\hat\alpha}(f_0,\cdots,f_n) =  \hat\Tt\otimes\id (f_0 \hat\delta f_1 \cdots \hat\delta f_n).$$
These quantities are invariant under a change of basis.

The dual pairing between $\R^d$ and ${\R^d}^*$ extends to the exterior modules and we denote this extension by $\langle \cdot,\cdot\rangle_\Lambda$.
For $v\in \Lambda^n {\R^d}^*$ let $D^e(v)$ be the unique element of 
$\Lambda^{d-n} \R^d$ which satisfies
$$\langle D^e(v),w\rangle_\Lambda = \langle  v\wedge w , e \rangle_\Lambda$$ 
for all $w\in \Lambda^{d-n} {\R^d}^*$. Here, we recall, $e=e_1\wedge\cdots\wedge e_d$ and we required that the corresponding parallel epiped has Lebesgue measure $1$. 
This defines an isomorphism $D^e:\Lambda {\R^d}^* \to \Lambda \R^d$, often referred to as Poincar\'e isomorphism, which depends only on the orientation $e$ of the basis. 
With these definitions we conclude from Theorem~\ref{thm-SS}
 \begin{theorem}\label{thm-diagram-2}
 Let $A$ be a $C^*$-algebra with an action $\alpha$ of $\R^d$ and $\alpha$-invariant faithful densely defined l.s.c.\ trace $\Tt$. Let $\hat\alpha$ be the dual action and $\hat\Tt$ the dual trace on the crossed product 
$A\rtimes_{\alpha}\R^d$. 
Let $n\leq d$. For any choice of orientation $e$ on $\R^d$ the following diagram is commutative
 $$\begin{array}{ccc}
 K_{n}(A) & \stackrel{\phi^e_\alpha} \to & K_{n+d}(A\rtimes_\alpha\R^d)  \\
  \downarrow  \langle\ch^{(n)}_{\Tt,\alpha}  ,\cdot\rangle    &  & \downarrow 
 \epsilon_{d-n}\langle \ch^{(d-n)}_{\hat\Tt,\hat\alpha} ,\cdot\rangle \\       
\Lambda^n{\R^d}^*  & \stackrel{D^e}\to &  \Lambda^{d-n}\R^d
 \end{array}$$
 where 
$\epsilon_k = (-1)^{N_k}$ with $N_k$ the number of odd natural numbers strictly below $k$.
 \end{theorem}
\begin{proof} Let $\Bb=e_1,\cdots,e_d$ be a base of $\R^d$ s.th.\ $e=e_1\wedge \cdots\wedge e_d$.
We have to show that, for all $w\in\Lambda^{d-n}{\R^d}^*$, 
$$\langle \langle \ch^{(n)}_{\Tt,\alpha} ,[x] \rangle \wedge w , 
e_1\wedge\cdots\wedge e_d \rangle_\Lambda =  \epsilon_{d-n}
\langle  w , \langle \ch^{(d-n)}_{\hat\Tt,\hat\alpha} ,{\Psi_e}_*^{-1}\phi_{\alpha_d}\cdots\phi_{\alpha_1}[x] \rangle 
 \rangle_\Lambda. $$
By the second formula of Theorem~\ref{thm-SS}
$$\langle \mathrm{Ch}_{\Tt,\alpha_{1},\cdots,\alpha_{n}},[x]\rangle =
(-1)^{n-1} \langle \mathrm{Ch}_{\Tt,\alpha_{2},\cdots,\alpha_{n},\alpha_1},[x]\rangle = \langle \mathrm{Ch}_{\hat\Tt,\alpha_{2},\cdots,\alpha_{n}},\phi_{\alpha_1}[x]\rangle$$
It follows iteratively that 
$$\langle \mathrm{Ch}_{\Tt,\alpha_{1},\cdots,\alpha_{n}},[x]\rangle =
\langle \mathrm{Ch}_{\hat\Tt},\phi_{\alpha_n}\cdots \phi_{\alpha_1}[x]\rangle$$
Now the first formula yields
$$\langle \mathrm{Ch}_{\hat\Tt},\phi_{\alpha_n}\cdots \phi_{\alpha_1}[x]\rangle = (-1)^n \langle \mathrm{Ch}_{\hat\Tt,\hat\alpha_{n+1}},\phi_{\alpha_{n+1}}\cdots \phi_{\alpha_1}[x]\rangle
$$
and we obtain iteratively
$$\langle \mathrm{Ch}_{\Tt,\alpha_{1},\cdots,\alpha_{n}},[x]_i\rangle = \epsilon_{d-n} \langle \mathrm{Ch}_{\hat\Tt,\hat \alpha_{n+1},\cdots,\hat\alpha_{d}},\phi_{\alpha_d}\cdots\phi_{\alpha_1}[x]_i\rangle.$$ 
This proves the formula for $w= e^{n+1}\wedge \cdots \wedge e^d$.

The other choices for $w$ can be related to the above by exchanging iteratively two basis elements. 
This introduces various signs which match to yield the result. We show this in the simpler case $n=1$, 
$d=2$, the general case following similarily.
\begin{eqnarray*}
\langle \langle \ch^{(1)}_{\Tt,\alpha} ,[x] \rangle \wedge e^1 , 
e_1\wedge e_2 \rangle_\Lambda & = &
-\langle\Ch_{\Tt,\alpha_2},[x]\rangle\\
 & = & 
 +\langle\Ch_{\hat\Tt,\alpha_2,\hat\alpha_1,\hat\alpha_2},\phi_2\phi_1[x]\rangle \\
& = & -\langle\Ch_{\hat\Tt,\hat\alpha_1,\alpha_2,\hat\alpha_2},\phi_2\phi_1[x]\rangle \\
& = & \epsilon_1\langle e^1,\langle\Ch_{\hat\Tt,\hat\alpha_1},\phi_2\phi_1[x]\rangle\rangle_\Lambda
\end{eqnarray*}
Above, the second and fourth equality follow from Theorem~\ref{thm-SS} and the third by antisymmetry of the Chern-cocycle w.r.t.\ an exchange of the order of the actions. 
\end{proof}

The constants $c_d$ have been chosen such that, besides the sign, no other numerical factor arises in Theorem~\ref{thm-SS} and $\langle \mathrm{ch}^{(0)}_{\id},K_0(\C)\rangle = \Z $. This implies that
$$\langle \mathrm{ch}^{(n)}_{\hat\Tt,\hat\id},K_n(\C\rtimes_{\id}\R^n)\rangle = \Z e_1\wedge\cdots \wedge e_n$$ 
and means that $K_n(\C\rtimes_{\id}\R^n)$ (which is isomorphic to $\Z$) contains an element $[x^{(n)}]_n$ which is a generator and satisfies $\langle \mathrm{ch}^{(n)}_{\hat\Tt,\hat\id},[x^{(n)}]_n\rangle = e_1\wedge\cdots \wedge e_n$. We refer to that element as the Bott element in $K_n(\C\rtimes_{\id}\R^n)$.

Fourier transformation provides an isomorphism between 
$\C\rtimes_{\id}\R^n$ and $C_0(\R^n)$ and under this isomorphism 
$\hat \Tt$ becomes $\int$, integration w.r.t.\ Lebesgue measure, and $\hat\id$ the translation action which we denote $\alpha$. Let $[\tilde b^{(n)}]_n - [s(\tilde b^{(n)})]_n$ be the 
image of the element $[x^{(n)}]_n$ under the induced morphism on $K$-theory. 
By homotopy invariance we may assume that the support of its representative $\tilde b^{(n)}-s(\tilde b^{(n)})$ is contained in the interior of the unit cube (w.r.t.\ the basis) $I\subset\R^n$. Let $b^{(n)}:\R^n\to M_m(\C)$ be given by
$$b^{(n)}(t) =\tilde  b^{(n)}(t-z)$$
where $z\in \Z^n$ (the lattice spanned by the basis) is such that $t-z\in I$. Then 
$b^{(n)}(t)$ is periodic, i.e.\
$b^{(n)}\in M_m(C_{\Z^n}(\R^n))$ and 
\begin{equation}\label{eq-norm}
1=\langle \ch^{(n)}_{\hat\Tt,\hat\id},[x^{(n)}]_n\rangle = 
\langle \ch^{(n)}_{\int ,\alpha},[ \tilde b^{(n)}]_n\rangle =
\langle \ch^{(n)}_{\mean,\alpha},[ b^{(n)}]_n\rangle
\end{equation}
where for the last equation we need that the volume of $I$ is $1$.

\subsection{The first commuting diagram}
There is a functorial way to extend the inclusion $\imath :\Et \to {\R^d}^*$ to the exterior moduls:
$$\Lambda^n\imath : \Lambda^n\Et \to \Lambda^n {\R^d}^* .$$
Indeed,
$$\Lambda^n\imath(k_1\wedge \cdots \wedge k_n) =
\sum_{i_1<i_2\cdots <i_n} \det(k_1,\cdots,k_n)_{i_1,\cdots,i_n} e^{i_1}\wedge\cdots\wedge e^{i_n}$$
where  $\det(k_1,\cdots,k_n)_{i_1,\cdots,i_n}$ is the determinant of the following $n\times n$ matrix: 
Each $k\in{\R^d}^*$ defines a column by taking its coefficients in the basis $e^1,\cdots,e^d$, hence $\begin{pmatrix} k_1 & \cdots & k_n \end{pmatrix}$ is a $d\times n$ matrix. We keep of this matrix  only the rows $i_1,\cdots,i_n$ to obtain the matrix of which we take the determinant.
\bigskip

We now use the above elements $b^{(n)}$ to extend the map $\varphi:\Et \to K_1(C_\pe(\R^d))$ to the exterior module.
Note that, if $f:\R\to\C$ is a continuous $1$-periodic function then 
$x\mapsto f(\frac{\ln e^{i\langle k,x\rangle}}{2\pi i})$ is a continuous function which moreover belongs to $C_\pe(\R^d)$ provided $k\in \Et$. Hence, 
for $k_1,\cdots, k_n \in \Et$ the function
$$\textstyle x\mapsto b^{(n)}_{k_1,\cdots, k_n}(x) := b^{(n)}(\frac{k_1 x}{2\pi},\cdots,\frac{k_n x}{2\pi})$$
belongs to $M_m(C_\pe(\R^d))$.
\begin{theorem}\label{thm-diagram-1} 
Let $n\leq d$. The map $\Lambda^n \varphi : \Lambda^n\Et \to K_n(C_\pe(\R^d))$
$$ \Lambda^n \varphi(k_1\wedge \cdots \wedge k_n) =  [b^{(n)}_{k_1,\cdots, k_n}]_n$$
is a well-defined homomorphism of groups. Moreover,
\begin{equation}
\begin{array}{ccc}
\Lambda^n\Et & \stackrel{\Lambda^n\varphi}\to &K_n(C_\pe(\R^d))\\
 \downarrow  (2\pi)^{-n} \Lambda^n\imath    &                     & \downarrow \langle\ch^{(n)}_{\mean,\alpha},\cdot\rangle\\       
 \Lambda^n{\R^d}^* & = & \Lambda^n{\R^d}^*
 \end{array}
 \end{equation}
 commutes.
\end{theorem}
\begin{proof}
The coefficients of $\langle \ch^{(n)}_{\mean,\alpha},[b^{(n)}_{k_1,\cdots,k_n}]_n)\rangle $ in the basis $e^1\wedge\cdots\wedge e^n$ are given by
 $$\langle \ch^{(n)}_{\mean,\alpha},[b^{(n)}_{k_1,\cdots,k_n}]_n)\rangle_{i_1,\cdots,i_n} 
 = c_n \sum_{\sigma \in S_n} 
\mathrm{sgn}(\sigma)\mean \Tr  \left(b_{\underline{k}} \frac{\partial b_{\underline{k}}}{\partial x_{i_{\sigma(1)}}}  \cdots \frac{\partial b_{\underline{k}}}{\partial x_{i_{\sigma(n)}}}  \right)$$
where we abbreviated $\underline{k} = k_1,\cdots,k_n$.
Denote by 
$k_{ij}$ the $j$-coefficient of $k_i$ w.r.t.\ the basis $e^1,\cdots,e^d$ of ${\R^d}^*$. Then
$$ \frac{\partial b_{\underline{k}}(x)}{\partial x_{i_{\sigma(1)}}}\cdots \frac{\partial b_{\underline{k}}(x)}{\partial x_{i_{\sigma(n)}}}
= \frac1{(2\pi)^n}\sum_{j_1,\cdots,j_n = 1}^n 
(k_{j_1 i_{\sigma(1)}}\cdots k_{j_n i_{\sigma(n)}}) 
\partial_{j_1} b_{\underline{k}}(x)\cdots \partial_{j_n} b_{\underline{k}}(x)
$$
If we take the signed sum over all permutations $\sigma$, then all termes in which two of the $j_i$ are equal drop out so that 
\begin{eqnarray*}
\sum_{\sigma\in S_n} \mathrm{sgn}(\sigma)\frac{\partial b_{\underline{k}}(x)}{\partial x_{i_{\sigma(1)}}}\cdots \frac{\partial b_{\underline{k}}(x)}{\partial x_{i_{\sigma(n)}}}
&=& \sum_{\sigma,\tau \in S_n} \frac{\mathrm{sgn}(\sigma)}{(2\pi)^n} 
(k_{\tau(1) i_{\sigma(1)}}\cdots k_{\tau(n) i_{\sigma(n)}}) 
\partial_{\tau(1)} b_{\underline{k}}(x)\cdots \partial_{\tau(n)} b_{\underline{k}}(x)
\\
& = &   \frac{\det(k_1,\cdots,k_n)_{i_1,\cdots,i_n}}{(2\pi)^n}
\sum_{\tau \in S_n} \mathrm{sgn}(\tau) \partial_{\tau(1)} b_{\underline{k}}(x)\cdots \partial_{\tau(n)} b_{\underline{k}}(x)
\end{eqnarray*}
The expression vanishes if $k_1,\cdots, k_n$ are linear dependent in ${\R^d}^*$. Otherwise,
$$ \mean \Tr(b_{\underline{k}}(x)\partial_{\tau(1)} b_{\underline{k}}(x)\cdots \partial_{\tau(n)} b_{\underline{k}}(x)) = 
\mean \Tr(b_{\underline{e}}(x)\partial_{\tau(1)} b_{\underline{e}}(x)\cdots \partial_{\tau(n)} b_{\underline{e}}(x)) $$
with $\underline e = e^1,\cdots, e^n$,
as the trace per unit volume of a periodic function is invariant under an invertible linear change of variables. We hence find
$$c_n \sum_{\sigma\in S_n} \mathrm{sgn}(\sigma) 
\mean \Tr (b_{\underline{k}}(x)\partial_{\tau(1)} b_{\underline{k}}(x)\cdots \partial_{\tau(n)} b_{\underline{k}}(x)) =  \ch^{(n)}_{\mean,\alpha},[b^{(n)}_{\underline e}]_n)\rangle = 1$$
by (\ref{eq-norm})
and the fact that $x \mapsto b^{(n)}_{\underline e}(x_1,\cdots,x_n)$ is constant in directions perpendicular to $e^1,\cdots,e^n$.
We thus have shown
that $$\langle \ch^{(n)}_{\mean,\alpha},[b^{(n)}_{k_1,\cdots,k_n}]_n)\rangle_{i_1,\cdots,i_n} = \frac1{(2\pi)^n} \Lambda^n\imath(k_1\wedge\cdots \wedge k_n).$$ 
In particular, the map $(k_1,\cdots,k_n)\mapsto \langle \ch^{(n)}_{\mean,\alpha},[b^{(n)}_{k_1,\cdots,k_n}]_n)\rangle$ is additive in each variable and totally antisymmetric.

It remains to show that already $(k_1,\cdots,k_n)\mapsto [b^{(n)}_{k_1,\cdots,k_n}]_n)$ is additive in each variable and totally antisymmetric and thus $\Lambda^n\varphi$ a well-defined homomorphism of groups. Let $\Gamma\subset \Et$ be a sublattice of rank $r$. The $C^*$-algebra $C^*(\Gamma)$ generated by the $\chi_k$ with $k\in\Gamma$ is a subalgebra of $C_\pe(\R^d)$ and the diagram 
$$\begin{array}{ccc}
\Lambda^n\Gamma & \stackrel{\Lambda^n\varphi}\to &K_n(C^*(\Gamma))\\
 \downarrow &                     & \downarrow \jmath_*\\   
 \Lambda^n\Et & \stackrel{\Lambda^n\varphi}\to &K_n(C_\pe(\R^d))
 \end{array}$$
commutes. Here the left vertical arrow is induced by the inclusion of $\Gamma$ in $\Et$  and the right vertical the map on $K$-theory induced by the inclusion $\jmath:C^*(\Gamma)\to C_\pe(\R^d)$. As any two elements of $\Et$ lie in some finite rank sublattice $\Gamma$ of $\Et$ it is sufficient to verify additivity and antisymmetry of the map $\Lambda^n\varphi$ in the first line. Now $C^*(\Gamma)$ is isomorphic to the algebra of continuous functions on the dual of $\Gamma$ which is an $r$ dimensional torus. It is well known that the elements of the $K$-group of the torus algebra can be separated by their pairing with cyclic cohomology, that is, by the values $\ch^{(s)}$, $s\leq r$ can take on them. In other words,   $\Lambda^n\varphi$ is additive in each variable and antisymmetric if, for any sublattice $\Gamma\subset \Et$ of rank $r$ and $s\leq r$, the restriction of $\langle\ch^{(s)}, \Lambda^n\varphi (\cdot)\rangle$ to $\Lambda^n\Gamma$ is additive in each variable and totally antisymmetric. For $s=n$ we have shown this above. For $s<n$ the function vanishes identically, as $\langle \ch^{(s)},[b^{(n)}_{\underline e}]_n\rangle$ is known to be $0$ for the torus if $s<n$. 
\end{proof}

\subsection{The second commuting diagram and main result}
To obtain the second commuting square we apply
Theorem~\ref{thm-diagram-2} to $A=C_\pe(\R^d)$ with $\alpha$ the translation action and $\Tt=\mean$, the trace per unit volume. This yields the commuting diagram
\begin{equation} \label{eq-diagram-2}
\begin{array}{ccc}
 K_{n}(C_\pe(\R^d)) & \stackrel{\phi_\alpha^e} \to & K_{n+d}(C_\pe(\R^d)\rtimes_\alpha\R^d)  \\
  \downarrow  \langle\ch^{(n)}_{\mean,\alpha}   ,\cdot\rangle   &  & \downarrow 
 \epsilon_{d-n} \langle\ch^{(d-n)}_{\tp,\hat\alpha},\cdot\rangle  \\       
\Lambda^n{\R^d}^*  & \stackrel{D^e}\to &  \Lambda^{d-n}\R^d
 \end{array}
 \end{equation}
and when combined with Theorem~\ref{thm-diagram-1} proves our main result:
\begin{theorem}\label{thm-main-diagram} Let $\pe$ be a decorated Delone subset of $\R^d$ and $C_\pe(\R^d)$ the algebra of continuous functions which are pattern equivariant w.r.t.\ $\pe$. Let $\Et$ be the set of wave vectors $k\in{\R^d}^*$ for which the function $x\mapsto e^{i\langle k,x \rangle }$ belongs to $C_\pe(\R^d)$. Let $0\leq n\leq d$. The following diagram is commutative.
$$\begin{array}{ccc} 
\Lambda^n\Et  & \stackrel{\phi^e_\alpha\circ\Lambda^n\varphi} \to & K_{n+d}(C_\pe(\R^d)\rtimes_\alpha\R^d)  \\
  \downarrow  \frac1{(2\pi)^n}\Lambda^n\imath      &  & \downarrow 
 \epsilon_{d-n}\langle \ch^{(d-n)}_{\tp,\hat\alpha},\cdot\rangle  \\       
\Lambda^n{\R^d}^*  & \stackrel{D^e}\to &  \Lambda^{d-n}\R^d
 \end{array}$$ 
\end{theorem}
As already mentioned, if $n=d$ then the right column corresponds to the gap-labelling, as  $\ch^{(0)}_{\tp,\hat\alpha}$ is the $0$-cocycle given by the trace per unit volume $\tp$. We hence obtain, for models defined by operators $H$ which are bounded from below and whose resolvent $(H+i)^{-1}$ belongs to $\Ape$: 
 \begin{cor} \label{cor-diagram-1}
The gap labelling group $\gap$ associated to $\pe$ contains the group which is generated by $\frac1{(2\pi)^d}$ times the volumes of parallel epipeds in ${\R^d}^*$ which are spanned by vectors from $\Et$. 
\end{cor}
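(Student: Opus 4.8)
The plan is to deduce the corollary directly from Theorem~\ref{thm-diagram-2}, of which it is essentially a restatement. First I would record that, under the identification $\Aa\cong C_\pe(\R^d)\rtimes_\alpha\R^d$, the composition $\Phi:={\phi^d_\alpha}_*\circ\varphi'\colon\Lambda^d\Et\to K_0(\Aa)$ is a group homomorphism, and that chasing the outer rectangle of the diagram in Theorem~\ref{thm-diagram-2} (top row followed by the rightmost vertical arrow, versus the leftmost vertical arrow) yields the identity
$$\tau_*\circ\Phi=\tfrac1{(2\pi)^d}\det\qquad\text{on }\Lambda^d\Et .$$
Consequently the image of $\tfrac1{(2\pi)^d}\det\colon\Lambda^d\Et\to\R$ is contained in $\tau_*\big(\Phi(\Lambda^d\Et)\big)\subseteq\tau_*(K_0(\Aa))=\gap$.

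Next I would identify this image with the group appearing in the statement. The point is that $\Lambda^d\Et$ is generated, as an abelian group, by the decomposable wedges $k_1\wedge\cdots\wedge k_d$ with $k_i\in\Et$; on such a wedge $\tfrac1{(2\pi)^d}\det$ returns $\tfrac1{(2\pi)^d}$ times the signed volume of the parallelepiped spanned by $k_1,\dots,k_d$. Being a homomorphism, $\tfrac1{(2\pi)^d}\det$ therefore has image exactly the subgroup of $\R$ generated by all these signed volumes; and since any subgroup of $\R$ contains the negatives of its elements, that subgroup coincides with the one generated by the (unsigned) parallelepiped volumes divided by $(2\pi)^d$, which is precisely the group in the statement. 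Combined with the previous paragraph this gives the claimed inclusion.

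I do not expect any genuine obstacle here: all of the substance---the construction of $\varphi'$ via $Ch_d^{-1}$ and the commutativity of the two squares---is already contained in Theorem~\ref{thm-diagram-2}, so the corollary is just a matter of unwinding definitions. The only point I would flag is that one should not expect equality in general: the Bragg to Gap map $\tau_*\circ\Phi$ need not surject onto $\gap$ (compare the Thue--Morse discussion of Section~8), which is why the statement asserts only containment.
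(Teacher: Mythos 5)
Your proposal is correct and is essentially the paper's own (implicit) argument: the corollary is stated as an immediate consequence of Theorem~\ref{thm-diagram-2}, obtained exactly as you do by composing the top row with $\tau_*$ and using the commutativity of both squares, so that $\tau_*\circ\Phi=\frac{1}{(2\pi)^d}\det$ on $\Lambda^d\Et$ and the image of the determinant map lands in $\gap$. Your additional remarks about decomposable wedges generating $\Lambda^d\Et$ and signed versus unsigned volumes are correct bookkeeping that the paper leaves tacit.
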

When considering a tight binding model on a set $\se\subset \R^d$ which is locally derivable from $\pe$ the gap labelling group has normalised by the density $\mathrm{dens}(\se)$ of this set. As a consequence we get:
\begin{cor} For tight binding models defined on $\se$,
the gap labelling group $\gapLS$ contains the subgroup which is generated by $\frac1{(2\pi)^d \mathrm{dens}(S)}$ times the volumes of parallelepipeds in $\hat\R^d$ which are spanned by vectors from $\Et$. 
\end{cor}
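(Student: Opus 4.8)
The plan is to deduce this corollary directly from Corollary~\ref{cor-diagram-1} together with the two facts about the Morita equivalence bimodule $\Mm$ recalled in Section~\ref{sec-ME}. First I would recall that $i_u : \Aa_\se \to \Aa$ induces an isomorphism ${i_u}_* : K_0(\Aa_\se) \to K_0(\Aa)$, so that $K_0(\Aa_\se) = {i_u}_*^{-1} K_0(\Aa)$, and that the two traces are related by $\tr = \nu\,\tau\circ i_u$ with $\nu = \mathrm{dens}(\se)^{-1}$ (the displayed identity $\nu^{-1} = \tau(\Pi_u) = \mathrm{dens}(\se)$ in Section~\ref{sec-ME}). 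Consequently, on the level of the induced maps on $K_0$,
\begin{equation*}
\tr_*(K_0(\Aa_\se)) = \nu\,\tau_*\big({i_u}_*(K_0(\Aa_\se))\big) = \nu\,\tau_*(K_0(\Aa)) = \mathrm{dens}(\se)^{-1}\,\gap,
\end{equation*}
which is exactly the relation $\gap_\se = \mathrm{dens}(\se)^{-1}\gap$ already stated in the text. So the corollary amounts to pushing Corollary~\ref{cor-diagram-1} through this scaling.

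The second step is then purely formal: Corollary~\ref{cor-diagram-1} asserts that $\gap$ contains the subgroup $G \subseteq \R$ generated by $\frac1{(2\pi)^d}\det(k_1\wedge\cdots\wedge k_d)$ as $k_1,\dots,k_d$ range over $\Et$. Multiplying the inclusion $G \subseteq \gap$ by the positive real $\mathrm{dens}(\se)^{-1}$ gives $\mathrm{dens}(\se)^{-1} G \subseteq \mathrm{dens}(\se)^{-1}\gap = \gap_\se$, and $\mathrm{dens}(\se)^{-1} G$ is precisely the subgroup generated by $\frac1{(2\pi)^d\,\mathrm{dens}(\se)}$ times the signed volumes of parallelepipeds in $\hat\R^d$ spanned by vectors from $\Et$. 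This is the claimed statement, so no further work is needed beyond invoking Theorem~\ref{thm-diagram-2}/Corollary~\ref{cor-diagram-1} and the bookkeeping of the density factor.

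The only point that requires a little care — and the step I would flag as the potential obstacle — is justifying that the relation between $\tr_*$ and $\tau_*$ is exactly multiplication by $\nu = \mathrm{dens}(\se)^{-1}$ on the images in $\R$, i.e.\ that $i_u$ genuinely intertwines the traces up to this scalar on all of $K_0$ and not merely on positive elements or on the full corner. Here one uses that $i_u$ is an injective $*$-homomorphism onto the full corner $\Pi_u\Aa\Pi_u$, that ${i_u}_*$ is an isomorphism on $K_0$ (not just injective), and that a (semi-finite, lower semi-continuous) trace restricted to a full corner still induces, via the corner isomorphism, the pairing with $K_0$ scaled by the value of the trace — these are standard facts about traces on Morita equivalent $C^*$-algebras and are already invoked in Section~\ref{sec-ME}. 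Since all of this is recalled in the excerpt, the proof of the corollary is essentially a one-line consequence of Corollary~\ref{cor-diagram-1} and the identity $\gap_\se = \mathrm{dens}(\se)^{-1}\gap$.
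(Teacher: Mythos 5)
Your proposal is correct and follows essentially the same route as the paper: the paper's proof simply extends the commuting diagram of Theorem~\ref{thm-diagram-2} to the right by the square with ${i_u}_*^{-1}:K_0(\Aa)\to K_0(\Aa_\se)$ and the vertical maps $\tau_*$ and $\mathrm{dens}(S)\,\tr_*$, which is exactly your identity $\tr_*\circ{i_u}_*^{-1}=\mathrm{dens}(\se)^{-1}\tau_*$ coming from $\tr=\nu\,\tau\circ i_u$ and $\nu^{-1}=\tau(\Pi_u)=\mathrm{dens}(\se)$. The subsequent rescaling of the subgroup from Corollary~\ref{cor-diagram-1} by $\mathrm{dens}(\se)^{-1}$ is the same bookkeeping the paper performs implicitly.
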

\begin{proof}
We can extend the commuting diagram of Theorem~\ref{thm-main-diagram} (with $n=d$) to the right by the diagram (\ref{eq-ME}).
\end{proof}
Let us mention that the statement of these corollaries is trivial if $\pe$ does not have the Meyer property, or is not topologically conjugate to a pattern which has the Meyer property \cite{kellendonk2014meyer}. Indeed, in this case
$\Et$ does not contain $d$ linear independent vectors and so there are no non-degenerate parallelepipeds. 

\subsection{External magnetic fields}
In the presence of external magnetic fields the observable algebra is the twisted crossed product 
$C_\pe(\R^d)\rtimes_{\alpha,\sigma_B} \R^d$, the twisting cocycle $\sigma_B$ depending on the magnetic field $B$ \cite{bellissard1986k}. At least for a constant magnetic field the twisted  crossed product may still be written as an iterated crossed product 
$$C_\pe(\R^d)\rtimes_{\alpha,\sigma_B} \R^d=C_\pe(\R^d)\rtimes_{\tilde\alpha_1}\R\rtimes_{\tilde\alpha_2}\R\cdots \rtimes_{\tilde\alpha_d}\R$$ so that the $K$-theory of the observable algebra is still isomorphic to that of $C_\pe(\R^d)$. However, the actions $\tilde\alpha_i$ are no longer translation actions, but rather magnetic translation actions, and therefore the corresponding Chern cocycles are different. For this reason the diagram which is analogous to (\ref{eq-diagram-2}) no longer makes sense. We give an example of this.
 
We consider a two dimensional system with constant perpendicular magnetic field $B$, like in the Quantum Hall effect, and use the Landau gauge.
Then the observable algebra can be written 
$C_\pe(\R^2)\rtimes_{\alpha_1}\R\rtimes_{\tilde\alpha_2}\R$ where $\alpha_1$ is the usual translation in the 1-direction, $\alpha_1^t(f)(x) = f(x+te_1)$, $f\in C_\pe(\R^2)$ while $\tilde\alpha_2$ is given by
$$\tilde\alpha_2^t(F)(s) = e^{iBts} \alpha^t_2(F(s))=\hat\alpha_1^{Bt}\alpha^t_2(F)(s)$$
on $F:C_c(\R, C_\pe(\R^2))$, where $\alpha_2$ is translation in the 2-direction on $C_\pe(\R^2)$.
It follows that the derivation corresponding to $\tilde\alpha_2$ picks up an extra term
$$\tilde\nabla_2(F)(s) = \nabla_2(F(s))-\frac1{2\pi}iBsF(s).$$  
This can be written
$$ \tilde\nabla_2 = \nabla_2+B\hat\nabla_1.$$
For non-zero $B$ we cannot express  $\langle \Ch_{\hat\mean},\phi_{\tilde\alpha_2}\phi_1 [x]_i\rangle$ through a Chern-cocycle defined on $C_\pe(\R^2)$ as in (\ref{eq-diagram-2}), since the action $\tilde\alpha_2$ is not defined on that algebra. Instead we can apply Theorem~\ref{thm-SS} to the algebra $A=C_\pe(\R^2)\rtimes_{\alpha_1}\R$ with $\R$-action $\tilde\alpha_2$ and the trace $\tau$ which is dual to $\mean$ on the crossed product with only the 1-component the action $\alpha_1$.  We obtain
\begin{eqnarray*}
\langle \Ch_{\hat\mean},\phi_{\tilde\alpha_2}[x]_i\rangle
& = & -\langle \Ch_{\hat\mean,\tilde\alpha_2,\hat{\tilde\alpha}_2},\phi_{\tilde\alpha_2}[x]_i\rangle\\
& = & -\langle \Ch_{\hat\mean,\alpha_2,\hat{\tilde\alpha}_2},\phi_{\tilde\alpha_2}[x]_i\rangle
-B\langle \Ch_{\hat\mean,\hat\alpha_1,\hat{\tilde\alpha}_2},\phi_{\tilde\alpha_2}[x]_i\rangle\\
& = & -\langle \Ch_{\tau,\alpha_2},[x]_i\rangle
-B\langle\Ch_{\tau,\hat\alpha_1},[x]_i\rangle
\end{eqnarray*}
This has the following physical interpretation which is best seen in the context of the Wiener-Hopf extension  describing the  
bulk boundary correspondence when a boundary is introduced along the $1$-direction. In this framework $C_\pe(\R^2)\rtimes_{\alpha_1}\R\otimes \mathcal K$ is the boundary algebra, $C_\pe(\R^2)\rtimes_{\alpha_1}\R\rtimes_{\tilde\alpha_2}\R$ the bulk-algebra and the exponential map $\exp$ of the Wiener Hopf extension equals $-\phi_{\tilde\alpha_2}^{-1}$. Given a bulk Hamiltonian $H$ with a gap at energy $E$ and associated spectral projection $P_E$ 
the above equation becomes
$$\langle \Ch_{\hat\mean},[P_E]_0\rangle
= \langle \Ch_{\tau,\alpha_2},\exp([P_E]_0)\rangle +
B\langle\Ch_{\tau,\hat\alpha_1},\exp([P_E]_0)\rangle
$$
The number on the left is the integrated density states of $H$ at energy $E$.
The first number on the right is the gradient pressure per unit energy exhibited on the boundary states in the gap, and the second number is $B$ times the edge conductivity \cite{kellendonk2005gap}.

\section{Topological eigenvalues, cohomology and the Ruelle Sullivan map}
\label{sec-coh}
The $K$-theory of a topological space is rationally isomorphic to its Cech cohomology and so there is an analogous description of the commuting diagram of Theorem~\ref{thm-diagram-1} in which the right column is replaced by Cech cohomology and the Ruelle Sullivan map from \cite{kellendonk2006ruelle}. We recall the relevant results from \cite{kellendonk2006ruelle} and \cite{barge2012maximal} and relate them to the work of \cite{akkermans2021relating}.

The results of \cite{kellendonk2006ruelle,barge2012maximal,akkermans2021relating} are formulated in the framework of $\R^d$ actions on topological spaces and here applied to $(\Omega_\pe,\alpha,\R^d,\mu)$ which is dual to $(C_\pe(\R^d),\alpha,\R^d,\mean)$. As 
$L^2(\Omega_\pe,\mu)$ (the dual to $L^2_\pe(\R^d,\mean)$) is separable and eigenfunctions to distinct eigenvalues are orthogonal, $\Et$ is a countable subgroup of ${\R^d}^*$ and thus the direct limit of free subgroups of finite rank. We equip it with the discrete topology. Then its Pontraryagin dual $\widehat \Et$ is an inverse limit of finite dimensional tori. Moreover, the exterior algebra $\Lambda^n \Et$ is naturally isomorphic to the Cech cohomology $\check H^n(\widehat{\Et},\Z)$ \cite{barge2012maximal}. 

Equipped with its induced action from $\R^d$ the dual group $\widehat \Et$ is the maximal equicontinuous factor of the dynamical system $(\Omega_\pe,\alpha,\R^d)$ \cite{aujogue2015equicontinuous,barge2012maximal}. The corresponding factor map $\pi:\Omega_\pe\to\widehat \Et$ induces a morphism $\pi_*:\check H^n(\widehat{\Et},\Z)  \to \check H^n(\Omega_\pe,\Z)$ which can be composed with the isomorphism between $\Lambda^n \Et$ and $H^n(\Omega_\pe,\Z)$ to obtain a group homomorphism
$$\Lambda^n\check\varphi:\Lambda^n \Et \to \check H^n(\Omega_\pe,\Z). $$
We may combine this homomorphism with the 
Ruelle Sullivan map $$\tau_{\alpha,\mu}:\check H^n(\Omega_\pe,\Z)\to \Lambda^n{\R^d}^*$$ from \cite{kellendonk2006ruelle}:
$\tau_{\alpha,\mu}$ is a composition of the morphism mapping the integer valued Cech cohomology $ \check H^n(\Omega_\pe,\Z)$ into the tangential cohomology $H^n_{tg}(\Omega_\pe,\R)$ of Moore and Schochet \cite{moore2006global} (the action on $\Omega_\pe$ is locally free), and then integrating against the measure $\mu$. 
As a result we get the following cohomological version of Theorem~\ref{thm-main-diagram}.
\begin{prop}
The following diagram is commutative
 \begin{equation}\label{diagram-1}
 \begin{array}{ccc}
 \Lambda^n\Et & \stackrel{\Lambda^n\check\varphi}\to & \check H^n (\Omega_\pe,\Z)  \\
 \downarrow    \frac1{(2\pi)^n} \Lambda^n\imath &                          & \downarrow \tau_{\alpha,\mu}  \\       
  \Lambda^n {\R^d}^* & = &  \Lambda^n {\R^d}^*   
 \end{array}
\end{equation} 
\end{prop}
 \begin{proof}
Since the image of $\Lambda^n\check\varphi$ lies in $\check H^n (\widehat{\Et},\Z)$ and $\widehat{\Et}$ is an inverse limit of finite dimensional tori on which the action restricts to a constant flow action, we can apply the same calculation as in the proof of Thm.~13 of \cite{kellendonk2006ruelle} to obtain the result.
 \end{proof}

In \cite{akkermans2021relating} the authors establish that $K_0(C(\Omega_\pe)\rtimes_\alpha \R^d)$ maps to the top degree tangential cohomology $H^d_{tg}(\Omega_\pe,\R)$, and that the diagram
 \begin{equation}\label{eq-diagram-3}
 \begin{array}{ccc}
K_{0}(C(\Omega_\pe)\rtimes_\alpha\R^d) ) & \to & H^d _{tg}(\Omega_\pe,\R) \\
 \downarrow \tau_*&                         & \downarrow   \int d\mu \\       
 \R & \cong  & \Lambda^d {\R^d}^* 
 \end{array}
 \end{equation}
 commutes (Thm.~9.1 of \cite{akkermans2021relating}). Here the right vertical arrow means integrating a form representing the cohomology class against the measure $\mu$ over $\Omega_\pe$, and the functional $\tau_*$ is defined by the dual trace $\tau$ on $C(\Omega_\pe)\rtimes_\alpha \R^d$. If $d\leq 3$ and $\pe$ has finite local complexity this can be strengthened to integer cohomology ({Cor.~9.5} of \cite{akkermans2021relating}), and then can be seen as the cohomological analog of (\ref{eq-diagram-2}) with $n=d$. To summarize, if $\pe$ has finitely local complexity and $d\leq 3$ then the two commuting diagrams  
(\ref{diagram-1}) and (\ref{eq-diagram-3}) combine to yield an alternative 
 proof of Theorem~\ref{thm-main-diagram} for $n=d$. 

 \section{The one dimensional situation} \label{sec-one-dim}
 The one dimensional situation is particularly interesting, as one dimensional Schr\"odinger operators are expected to have many gaps. Generically one can expect that all gaps predicted by the gap-labelling theorem are open. We therefore have a closer look at this case.
\subsection{Perturbative arguments}
If the solid is crystalline, that is, there is a lattice $\Gamma\subset \R$ such that $\pe_a+x=\pe_a$ for all $x\in\Gamma$, $a\in \al$,  then its Bragg peaks are located on $\Gamma^{rec}$, the reciprocal lattice in $k$-space \cite{ashcroft1976solid,baake2013aperiodic}. Likewise, the Fourier transform $\hat V$ of a $\Gamma$-periodic potential $V$ is supported on $\Gamma^{rec}$. The spectrum of the free Laplacian $ -\partial^2_x$ is $\{k^2:k\in \hat\R\}$ thus twofold degenerated if $k\neq 0$. 
In formal degenerate perturbation theory one finds that the degeneracy at energy $k^2$ can be lifted by the potential if $\hat V(2 k)\neq 0$, and as a consequence $H=-\partial^2_x + \lambda V$ may develop a gap at that energy if $k\in \frac{1}2 \Gamma^{rec}$ when $\lambda$ is turned on \cite{ashcroft1976solid,reed1978iv}.  This leads to a relation between the Bragg spectrum and the gaps in the electronic spectrum which is particularly simple in one dimension, as all gaps open up under the perturbation with a generic potential \cite{reed1978iv} and so $H$ has a band spectrum in which each band contributes equally to the integrated density of states. We can then count the bands from low to high energy either by the positive part of the Bragg spectrum--for that we need to choose a positive direction on $\Gamma^{rec}$--or by the values of the integrated density of states up to the gap. 
That these two ways of counting are the same is the content of Cor.~\ref{cor-diagram-1}, as a separated band has an integrated density of $1$ per unit cell. 

We should mention that the perturbation argument of \cite{reed1978iv} hinges on two crucial properties of the system: first, that the Hilbert space can be fibered over a parameter space in such a way that its fibres are preserved under the action of $H$, and second, that those degenerate subspaces of the restriction of $H$ to a fibre, whose degeneracies are to be lifted by the potential,  
are energetically isolated from their ortho-complement in the fibre. This is the case for periodic potentials for which the Hilbert space fibres over the Brillouin zone. But we cannot expect such a scenario for aperiodic potentials. It seems nevertheless compatible with numerical simulations that when $\hat V(2k)$ is large then a gap opens pertubatively in $\lambda$ at energy $k^2$, whether $V$ is periodic or not.

Arguments involving a periodic approximation of aperiodic potentials usually have a heuristic component, namely when it comes to the convergence gap labels. The difficult part here is that more and more gaps appear when the period grows and despite the fact that we usually have convergence of the spectra in the Hausdorff topology \cite{beckus2018spectral}, it is difficult to control which gaps converge to the gaps of the aperiodic operator. 

In \cite{gambaudo2014brillouin} the authors propose to approximate the potential $V$ by suppressing in its Fourier transform all but the strongest components, of which there ought to be only finitely many. In this way, $V$ is approximated by an almost periodic function with finitely many possibly incommensurate wave lengths (the chosen Bragg peak positions). In one dimension one can then apply the gap labelling by means of the rotation number of Johnson-Moser \cite{johnson1982rotation}
to obtain the gap labels of the approximated Hamiltonian. The gap-labelling theorem of Johnson-Moser states that these belong to the $\Z$-module generated by the $\frac1{2\pi}$ times the wave lengths. This is in agreement with Cor.~\ref{cor-diagram-1}, or, put differently, Cor.~\ref{cor-diagram-1} confirms that the heuristic approach to approximate the potential by an almost periodic one gives the right answer.

\subsection{$K$-theory arguments} Let us describe again the content of Theorem~\ref{thm-diagram-2} in one dimension where all the maps involved can be made very explicit. 
We saw that $k$ is a topological eigenvalue if and only if the function
$\chi_k:\R\to \C$, $\chi_k(x) =e^{i\langle k,x\rangle}$ is an element of $C_\pe(\R)$. 
As $\chi_k$ is a unitary it defines an element of  $K_1(C_{\pe}(\R))$ which allowed us to define $\varphi:\Et \to K_1(C_{\pe}(\R))$,
$\varphi(k) =  [\chi_k]_1$.
The equation $\chi_{k_1+k_2} = \chi_{k_1}\chi_{k_2}$ confirms that this map is a group homomorphism. 
The $1$-cocycle $\Ch_{\mean,\alpha}$ entering in Theorem~\ref{thm-diagram-2} is given by
$(f_0,f_1) \mapsto \frac1{2\pi i} \mean (f_0 \partial_{x} f_1) $. Thus 
$$\langle\Ch_{\mean,\alpha},[\chi_k]_1\rangle =  \frac1{2\pi i} \mean (\chi_k^{-1} \partial_{x}\chi_k) = \frac{k}{2\pi}$$ 
which is equal to $1$ if $k$ is a generator of the lattice reciprocal to $\Z$, $k = 2\pi$. 

The Connes Thom isomorphism $\phi_\alpha$ maps $K_1(C_{\pe}(\R))$ to $K_0(C_{\pe}(\R)\rtimes_\alpha\R)$ but its general form is not very explicit. In our context we can use  the commuting diagram (\ref{eq-diagram-2}) to
compute its values on $[\chi_k]_1$, for any $k\in\Et$. 
For that consider the Mathieu Hamiltonian 
$H_k(\lambda) = -\partial_x^2 + \lambda (\chi_k+\bar\chi_k)$ with coupling constant $\lambda>0$.
It is known that all possible gaps of $H_k(\lambda)$ are open \cite{reed1978iv}. 
Let $P_k(\lambda)$ be the spectral projection associated with the first gap of $H_k(\lambda)$, that is, the spectral projection onto the lowest energy band of $H_k(\lambda)$. 
Given that $H_k(\lambda)$ is bounded from below and
$\chi_k$ an element of $C_\pe(\R)\rtimes_\alpha\R$, $P_k(\lambda)$ is a projection of 
$C_\pe(\R)\rtimes_\alpha\R$. It depends continuously on $\lambda$ and thus its $K_0$-class is independent of $\lambda$ as long as $\lambda>0$. 
The integrated density of states at the first gap of $H_k(\lambda)$ is $\tp(P_k(\lambda))$. It can be calculated in the limit $\lambda\to 0$ to be $\frac{|k|}{2\pi}$. 
It follows that $\tp_*([P_k(\lambda)]) = \frac{|k|}{2\pi}$ thus coinciding with $\langle\Ch_{\mean,\alpha},[\chi_k]_1\rangle$ if $k$ is positive. 
We claim that this implies that $\phi_\alpha([\chi_k]_1) = [P_k(\lambda)]_0$.
  
To proof the claim we observe that  the $C^*$-algebra generated by $\chi_k$ is $C_{\frac{2\pi}{k}\Z}(\R)$, the algebra of continuous $\frac{2\pi}{k}$-periodic functions. It is a subalgebra of $C_\pe(\R)$. 
Its $K_1$-group is generated by $[\chi_k]_1$. 
By naturality of the Connes Thom isomorphism 
$\phi_\alpha([\chi_k]_1)$ is determined by its image in $K_0(C_{\frac{2\pi}{k}\Z}(\R)\rtimes_\alpha\R)$. 
As $K_0(C_{\frac{2\pi}{k}\Z}(\R)\rtimes_\alpha\R)\cong \Z$ the tracial state $\tp_*$ must be injective on $K_0(C_{\frac{2\pi}{k}\Z}(\R)\rtimes_\alpha\R)$. We just saw that 
$\tp_*([P_k(\lambda)]_0) = \frac{|k|}{2\pi}$.
By Corollary~\ref{cor-diagram-1} $\frac{|k|}{2\pi}$ is a generator of the gap labelling group associated to $\pe = \frac{2\pi}{k}\Z$. Hence $[P_k(\lambda)]_0$ is a generator of  $K_0(C_{\frac{2\pi}{k}\Z}(\R)\rtimes_\alpha\R)$ so that 
$$\phi^1_\alpha([\chi_k]_1) = \mathrm{sign}(k) [P_k(\lambda)]_0.$$

\subsection{When is the Gap-Labelling group determined by the topological Bragg spectrum?}
Now that we have defined a group homomorphism from $\Lambda^d\Et$ to $\gap$ we ask the question: when is this homomorphism surjective? For brevity we call this homomorphism the Bragg to Gap map. 

If the dimension is at most $3$ and $\pe$ has finite local complextiy (and there is no external magnetic field\footnote{With constant external magnetic field the situation is more complicated, as the magnetic field accounts for extra generators \cite{benameur2020proof}.})
 then the gap-labelling theorem tells us that $\gapLS$ must be the frequency module of $\pe$.  This can be employed to calculate $\gapLS$ and $\gap$ in a variety of situations, as for instance for substitution tilings and for almost canonical cut-and-project patterns. In both cases all dynamical eigenvalues are topological and, in particular for cut and project patterns, $\Lambda^d\Et$ is easy to compute, see for instance \cite{kellendonk2006ruelle}. In the latter case the question after the surjectivity comes down to calculating the volumes of acceptance domains. For substitution tilings, $\gap$ can be computed from the data of the substitution \cite{kellendonk2000tilings}. 
 Explicit calculations show that for Penrose tilings and for octagonal tilings, 
$(2\pi)^{-d}\det(\Lambda^d\Et)$ is a finite index subgroup of $\gapLS$ \cite{kellendonk1995noncommutative}.

For one-dimensional systems one can say a little more. Using the isomorphism between
$\check H^1(\Omega_\pe,\Z)$ and $K_0(C_\pe(\R)\rtimes_\alpha\R)$ and comparing (\ref{diagram-1}) with Theorem~\ref{thm-diagram-2} the Bragg to gap map corresponds in cohomology to the composition
$$ \Et \stackrel{\check\varphi}\to  \check H^1(\Omega_\pe,\Z) \stackrel{\tau_{\alpha,\mu}}\to \R$$
which has been analysed in \cite{barge2012maximal}. 
Now $\gap$ is the image of $\tau_{\alpha,\mu}$. 
Recall that $\tau_{\alpha,\mu}\circ \check\varphi$ is injective.
We therefore have two exact sequences with $\check H^1(\Omega_\pe,\Z)$ in the middle
\begin{eqnarray} 0\to \Et \stackrel{\check\varphi}\to & \check H^1(\Omega_\pe,\Z) &
\to\mathrm{coker}\check\varphi \to 0 \label{SES-phi} \\
0\to \ker \tau_{\alpha,\mu} \to& \check H^1(\Omega_\pe,\Z) &
\stackrel{\tau_{\alpha,\mu}}\to \gap \to 0 \label{SES-RS} 
\end{eqnarray}
A diagram chase shows that if $\tau_{\alpha,\mu}\circ \check\varphi$ is also surjective, then both sequences must split and $\check H^1(\Omega_\pe,\Z)\cong  \Et \oplus \ker \tau_{\alpha,\mu}$.
Formulated in terms of $K$-theory this means that
$$ K_0(C_\pe(\R)\rtimes_\alpha\R) \cong \Et \oplus \ker \tp_* $$
provided the Bragg to gap map is surjective.

The following result from \cite{andress2016cech} fits into this context. It applies to one-dimensional substitution tilings. We associate to a tiling the Delone $\pe$ defined by its boundary points. The reader can consult \cite{andress2016cech} for the precise definitions of the involved substitutions and their associated notions. 
It is known that for such substitution tilings all dynamical eigenvalues are topological $\mathcal E_\pe=\Et$ \cite{solomyak2007eigenfunctions}. 
\begin{theorem}[\cite{andress2016cech}]
Consider a one-dimensional primitive substitution tiling with common
prefix. Assume furthermore that the substitution is irreducible. 
Then $\tau_{\alpha,\mu}$ is injective. 
Moreover, if its dilation factor is a Pisot number then $\check\varphi$ is surjective.
\end{theorem}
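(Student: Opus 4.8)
The plan is to make both assertions concrete via the Anderson--Putnam (Barge--Diamond) description of the cohomology of a one-dimensional substitution tiling space, together with the explicit form of the Ruelle--Sullivan map and of the maximal equicontinuous factor. Throughout, let $M$ denote the abelianisation matrix of the substitution, $n=|A|$ the size of the alphabet, and $\lambda$ the Perron--Frobenius eigenvalue of $M$, which is the dilation factor.

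\emph{Injectivity of $RS$.} The common-prefix hypothesis is used to bring the substitution into a form for which the Anderson--Putnam analysis yields no correction terms, so that $\check H^1(\Omega_\pe,\Z)\cong\varinjlim\big(\Z^n\xrightarrow{M^{\mathrm t}}\Z^n\xrightarrow{M^{\mathrm t}}\cdots\big)$, the transpose appearing because cohomology is contravariant; in this picture $RS$ is induced on the limit by sending, at level $k$, a vector $v\in\Z^n$ to $\lambda^{-k}\langle\mathbf f,v\rangle$, where $\mathbf f$ is the normalised Perron--Frobenius frequency vector (the right $\lambda$-eigenvector of $M$), which is compatible with the transition maps since $\langle\mathbf f,M^{\mathrm t}v\rangle=\langle M\mathbf f,v\rangle=\lambda\langle\mathbf f,v\rangle$. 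Now ``irreducible'' means that the characteristic polynomial of $M$ is irreducible over $\QM$, so $\QM^n$ with the $M^{\mathrm t}$-action is, as a module over the polynomial ring, isomorphic to the field $\QM(\lambda)$ with $M^{\mathrm t}$ acting by multiplication by $\lambda$. Since $0$ is not a root, $\lambda$ is invertible and passing to the direct limit changes nothing rationally: $\check H^1(\Omega_\pe,\Z)\otimes\QM\cong\QM(\lambda)$, and under this identification $RS$ becomes a $\QM$-linear map $\QM(\lambda)\to\R$ that intertwines multiplication by $\lambda$ on both sides and is nonzero, since it sends the level-zero class of a standard basis vector to the corresponding positive patch frequency $\mathbf f_i$. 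Therefore $RS$ is a nonzero scalar multiple of the real field embedding $\QM(\lambda)\hookrightarrow\R$, $\lambda\mapsto\lambda$, in particular injective; and since $\check H^1(\Omega_\pe,\Z)$ is torsion-free, rational injectivity forces injectivity of $RS$ itself.

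\emph{Surjectivity of $\varphi$ when $\lambda$ is Pisot.} Recall that $\varphi$ is the composite of the isomorphism $\Lambda^1\Et\cong\check H^1(\widehat\Et,\Z)$ with $\pi^*$, where $\pi\colon\Omega_\pe\to\widehat\Et$ is the maximal equicontinuous factor map. The strategy is to identify both the source and the target of $\varphi$ with the \emph{same} direct system $\varinjlim(\Z^n,M^{\mathrm t})$ and then to show that $\pi^*$ is, up to isomorphism, the identity. For an irreducible Pisot substitution the group of topological eigenvalues has rank exactly $n$ --- it is precisely Pisot-ness that prevents the contracting algebraic conjugates of $\lambda$ from contributing extra eigenvalue generators --- and $\widehat\Et$ is the $n$-dimensional solenoid $\varprojlim(\TM^n,M)$, which degenerates to a torus when $M$ is unimodular; hence $\check H^1(\widehat\Et,\Z)\cong\varinjlim(\Z^n,M^{\mathrm t})\cong\Et$, compatibly with $\Lambda^1\Et\cong\check H^1(\widehat\Et,\Z)$. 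One then traces $\pi$ through the two systems of approximants: the Anderson--Putnam complex has $H^1\cong\Z^n$ with the substitution acting as $M^{\mathrm t}$, the same data as the $1$-skeleton of $\TM^n$, and one verifies that $\pi$ respects these approximations, so that $\pi^*$, and hence $\varphi$, is an isomorphism --- a fortiori surjective.

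The hard part is this last matching: showing at the level of the defining approximations that $\pi$ induces an isomorphism of the two copies of $\varinjlim(\Z^n,M^{\mathrm t})$, with no cokernel surviving the passage to the limit. This is where all three hypotheses cooperate: irreducibility fixes the rank at $n$ and excludes rational denominators, the common prefix makes the Anderson--Putnam complex agree with the $1$-skeleton of $\TM^n$, and Pisot-ness guarantees that $\widehat\Et=\varprojlim(\TM^n,M)$ and that $\pi$ has cohomological degree one. An alternative to the geometric matching is an index count: since $RS\circ\varphi$ is injective (recalled above) so is $\varphi$, whence $\varphi\otimes\QM$ is an isomorphism of one-dimensional $\QM(\lambda)$-vector spaces and $\mathrm{coker}\,\varphi$ is a finite group; one then proves it trivial by computing, for each truncation, the determinant of the induced map $\Z^n\to\Z^n$ and checking it equals $\pm$ a power of $\lambda$, which is a unit after $\lambda$ has been inverted in the limit. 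I expect the geometric route to be the cleaner one.
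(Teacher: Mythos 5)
The paper itself offers no proof of this theorem: it is imported verbatim from \cite{andress2016cech}, so your proposal has to stand on its own. The first half essentially does. Granting two identifications that you assert rather than prove, namely (i) $\check H^1(\Omega_\pe,\Z)\cong\varinjlim(\Z^n,M^{\mathrm t})$ and (ii) that under this identification $RS$ becomes the frequency pairing $[v,k]\mapsto\lambda^{-k}\langle\mathbf f,v\rangle$, your $\QM(\lambda)$-linearity argument is correct and clean: irreducibility of the characteristic polynomial makes $\check H^1\otimes\QM$ a one-dimensional $\QM(\lambda)$-vector space, $RS$ is $\lambda$-equivariant and nonzero (patch frequencies are strictly positive by primitivity), hence rationally injective, and torsion-freeness upgrades this to integral injectivity. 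Point (i) is where the common prefix actually enters and deserves more than the phrase ``no correction terms'': in the Barge--Diamond computation $\check H^1$ acquires an extra summand $\tilde H^0(S_\infty)$, with $S_\infty$ the eventual range of the subcomplex of transition arcs, and the common prefix kills it because after one substitution every transition arc ends at the initial vertex of the common first letter, so $S_\infty$ is connected. Without some such argument the reduction to $\varinjlim(\Z^n,M^{\mathrm t})$ is unjustified, and it is false for general primitive substitutions (Thue--Morse being the standard counterexample).

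The Pisot half is where the genuine gap lies, and you say so yourself: ``one verifies that $\pi$ respects these approximations'' is precisely the content of the theorem, not a verification. Identifying $\widehat{\Et}$ with $\varprojlim(\TM^n,M)$ and $\check H^1(\widehat{\Et},\Z)$ with $\varinjlim(\Z^n,M^{\mathrm t})$ already uses nontrivial input (that the eigenvalue group of an irreducible Pisot substitution has rank exactly $n$ and is the full $M^{\mathrm t}$-limit of a lattice), and matching the two direct systems through $\pi$ requires the geometric-realization/proximality machinery of the cited literature; moreover, for $n\geq 2$ the Anderson--Putnam complex cannot ``agree with the $1$-skeleton of $\TM^n$'' in any literal sense. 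Your fallback argument fails at a specific point: injectivity of $RS\circ\varphi$ does give that $\varphi\otimes\QM$ is an isomorphism of one-dimensional $\QM(\lambda)$-spaces, hence that $\mathrm{coker}\,\varphi$ is torsion, but torsion does not imply finite here, since neither group is finitely generated when $\det M\neq\pm1$: compare $\Z[1/3]\subset\Z[1/6]$, a rational isomorphism with infinite cokernel. Excluding exactly this kind of residual (possibly infinite) index is the whole difficulty --- the Thue--Morse discussion in Section~8, where the image has index $3$, shows the phenomenon is real --- and it is not settled by the proposed determinant count: the truncated maps are maps of lattices $\Z^n\to\Z^n$, whose determinants are integers (powers of $\det M=\pm$ the norm of $\lambda$, not powers of $\lambda$), and one must control which primes occur in the finite-stage cokernels and how they behave under the connecting maps before $\det M$ is inverted in the limit. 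As it stands, the surjectivity statement is a plan, not a proof.
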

In particular, for a primitive irreducible substitution with common prefix whose dilation factor is a Pisot number, we have
\begin{equation}\label{eq-subst}
\mathcal E_\pe=\Et \stackrel{\phi_\alpha\circ \varphi}{\cong}  K_0(C_\pe(\R)\rtimes_\alpha\R) \stackrel{\tp_*}{\cong} \gap
\end{equation}
and the Bragg to gap map is bijective. The Fibonacci tiling is the most prominent example of such a substitution tiling.

It this context we point out that a one-dimensional primitive substitution tiling has a non-trivial group of eigenvalues if and only if the dilation factor is a Pisot number \cite{solomyak1997dynamics}.  So for primitive irreducible substitution tilings with common prefix the only alternative to (\ref{eq-subst}) is $\Et=\{0\}$ and in this case the Bragg spectrum does not give any information on the gap-labelling.

\subsection{Example: the Thue-Morse solid}
We end this section with an example in which the Bragg to gap map is not surjective.

The Thue-Morse sequence represents the spatial structure of a one-dimensional solid with two types of atoms $\al = \{1,\bar 1\}$ which is obtained by the substitution
$$1 \mapsto 1 \bar 1,\quad \bar 1 \mapsto \bar 1 1  .$$
This is a primitive reducible Pisot-substitution with scaling factor $2$. 
More precicely, the Thue Morse sequence is a twosided sequence $(\sigma_n)_{n\in \Z}$ which is a fixed point of the above substitution. It can be recursively defined as follows:  let $s_0=1$ and
$$s_{n+1} = s_n \overline{s_n}$$
for $n\geq 0$ 
where $\overline{a_1 a_2 \cdots a_k} = \bar a_1 \bar a_2 \cdots \bar a_k$ and $\bar{\bar 1} = 1$. 
In the limit $n\to \infty$ one obtains a one-sided infinite sequence with values in $\al$ which we denote $(\sigma_n)_{n\in\NM}$. This sequence is then completed to a two-sided infinite sequence $(\sigma_n)_{n\in\Z}$
by mirror reflection: $\sigma_{-n} = \sigma_{n-1}$. Finally the subsets 
$$\pe_1=\{n\in\Z: \sigma_n =1 \}, \quad \pe_{\bar 1}=\{n\in\Z: \sigma_n = \bar{1} \}$$ model the spatial structure of the solid.

The gap-labelling group for this solid was computed by Bellissard \cite{bellissard1990spectral}. In \cite{barge2012maximal} one can find the computation of the Bragg to gap map (on the level of cohomology) together with the result, that  the exact sequence (\ref{SES-RS}) does not split. Indeed, $\Et$ is mapped under the Bragg to gap map into a subgroup of index $3$ of the gap-labelling group
$$\frac1{2\pi}  \Et = \left\{\frac{m}{2^n} : m\in\Z,n\in\NM\right\} \stackrel{}
\hookrightarrow
\left\{{\frac13}\frac{m}{2^n} : m\in\Z,n\in\NM\right\} = \gap $$
For a generic choice of pattern equivariant potential all gaps allowed by the gap-labelling group are indeed open. But for the tight binding operator
$$H_\lambda \psi_n=\psi_{n+1}+\psi_{n-1} -2\psi_n + V(n)\psi_n $$
with $V(n) = 1$ if $n\in {\pe_1}$ and  $V(n) = -1$ if $n\in {\pe_{\bar 1}}$, there is a symmetry which prevents the gaps coming from $\Et$, except for the one with label $\frac12$, to open \cite{bellissard1990spectral}. 
This points out the question, what is the gap opening mechanism for the gaps of this operator? 

In early work predating the gap-labelling theorem of Bellissard, Luck presented heuristic arguments predicting that a gap opens perturbatively in the spectrum of $H_\lambda$ at those values for $k\in\hat \R$ for which the scaling exponent satisfies
$$\gamma(k) := \limsup_{N\to +\infty} \frac{\ln  \prod_{l=0}^{N-1} \sin^2(\pi 2^l k)}{\ln 2^N} > -1 $$
cf.\ the statement (a) on page 5838 combined with Equations (3.7,3.8) of \cite{luck1989cantor}. His arguments were based on the transfer map for $H_\lambda$ and a perturbative analysis of its associated Liapounov exponent. Luck's predictions are, however, contradicted by the gap-labelling theorem. In fact, there are infinitely many $q\in\NM$ for which $\gamma(\frac{2\pi}{q})> -1$, see \cite{baake2013scaling} where $\gamma$ is denoted by $\beta$, while the gap labelling group contains only denominators of the form $q=3\times 2^n$. This underlines the power of the gap-labelling theorem: it gives us a selection rule telling us which gaps can open under perturbation. We are not aware of any analytic arguments (not based on $K$-theory)  which explain why the gap at wave vector $k=\frac13$ opens while, for instance, at wave vector $k=\frac15$ it cannot open.
Of course, as $\gamma(\frac13)$ is quite much bigger than $\gamma(\frac15)$ one could expect the gap at $k=\frac13$ to be larger than the one at $k=\frac15$, but the gap labelling tells us that the latter cannot open at all.
The gap opening mechanism for gaps with labels from outside of $\Et$ remains mysterious.

\bibliography{Bragg2Gap}{}
\bibliographystyle{amsplain}

\end{document}